\def\({\left(}
\def\){\right)}
\def\1{\mathbf{1}}
\def\diam{\mathrm{diam}\ }
\def\dist{\text{dist}\ }
\def\dt0{{{\frac{d}{dt}}_{|t=0}}}
\def\ep{\varepsilon}
\def\l|{\left|}
\def\r|{\right|}
\numberwithin{equation}{section}
\newcommand{\eps}{\varepsilon} \newcommand{\TT}{{\mathbb T^2_\ell}}
\newtheorem{theorem}{Theorem}
\newtheorem{coro}{Corollary}[section] 
\newtheorem{lem}[coro]{Lemma}
\newtheorem{definition}[coro]{Definition}
\newtheorem{remark}[coro]{Remark}
\begin{document}

\title{\bf The $\mathbf \Gamma$-limit of the two-dimensional
  Ohta-Kawasaki energy. I. Droplet density.}

\author{Dorian Goldman, Cyrill B. Muratov and Sylvia Serfaty}

\maketitle

\begin{abstract}
  This is the first in a series of two papers in which we derive a
  $\Gamma$-expansion for a two-dimensional non-local Ginzburg-Landau
  energy with Coulomb repulsion, also known as the Ohta-Kawasaki model
  in connection with diblock copolymer systems.  In that model, two
  phases appear, which interact via a nonlocal Coulomb type energy. We
  focus on the regime where one of the phases has very small volume
  fraction, thus creating small ``droplets'' of the minority phase in
  a ``sea" of the majority phase.  In this paper we show that an
  appropriate setting for $\Gamma$-convergence in the considered
  parameter regime is via weak convergence of the suitably normalized
  charge density in the sense of measures. We prove that, after a
  suitable rescaling, the Ohta-Kawasaki energy functional
  $\Gamma$-converges to a quadratic energy functional of the limit
  charge density generated by the screened Coulomb kernel.  A
  consequence of our results is that minimizers (or almost minimizers)
  of the energy have droplets which are almost all asymptotically
  round, have the same radius and are uniformly distributed in the
  domain.  The proof relies mainly on the analysis of the sharp
  interface version of the energy, with the connection to the original
  diffuse interface model obtained via matching upper and lower bounds
  for the energy. We thus also obtain an asymptotic characterization
  of the energy minimizers in the diffuse interface model.
\end{abstract}

\section{Introduction}\label{sec:intro}

In the studies of energy-driven pattern formation, one often
encounters variational problems with competing terms operating on
different spatial scales
\cite{seul95,hubert,huebener,strukov,vedmedenko,muthukumar97,
  lundqvist}.  Despite the fundamental importance of these problems to
a multitude of physical systems, their detailed mathematical studies
are fairly recent (see
e.g. \cite{sandier,kohn07iciam,friesecke06,choksi99,choksi01,
  choksi08,choksi04,choksi98}). To a great extent this fact is related
to the emerging multiscale structure of the energy minimizing patterns
and the associated difficulty of their description
\cite{muller93,choksi99,desimone00,choksi08,lebris05}. In particular,
the popular approach of $\Gamma$-convergence \cite{braides} is
rendered difficult due to the emergence of more than two
well-separated spatial scales in suitable asymptotic limits (see
e.g. \cite{choksi98,choksi99,choksi04,choksi08,desimone00,lebris05,
  sandier12,muller93}).  

These issues can be readily seen in the case of the Ohta-Kawasaki
model, a canonical mathematical model in the studies of energy-driven
pattern forming systems.  This model, originally proposed in
\cite{ohta86} to describe different morphologies observed in diblock
copolymer melts (see e.g. \cite{bates99}) is defined (up to a choice of scales)
  by the energy
functional \begin{align}
  \label{EE}
  \mathcal{E}[u] = \int_\Omega \( \frac{\eps^2}{2} |\nabla u|^2 +
  W(u)\) dx + \frac{1}{2} \int_\Omega \int_\Omega (u(x)- \bar u)
  G_0(x, y) (u(y)- \bar u) \, dx \, dy,
\end{align}
where $\Omega$ is the domain occupied by the material, $u: \Omega \to
\mathbb R$ is the scalar order parameter, $W(u)$ is a symmetric
double-well potential with minima at $u = \pm 1$, such as the usual
Ginzburg-Landau potential $W(u) = \tfrac14 (1 - u^2)^2$, $\eps > 0$ is
a parameter characterizing interfacial thickness, $\bar u \in (-1, 1)$
is the background charge density, and $G_0$ is the Neumann Green's
function of the Laplacian, i.e., $G_0$ solves
\begin{align}
  \label{G0}
  -\Delta G_0(x, y) = \delta(x - y) - {1 \over |\Omega|}, \qquad
  \int_\Omega G_0(x, y) \, dx = 0,
\end{align}
where $\Delta$ is the Laplacian in $x$ and $\delta(x)$ is the Dirac
delta-function, with Neumann boundary conditions. Note that $u$ is
also assumed to satisfy the ``charge neutrality'' condition
\begin{align}
  \label{neutr}
  {1 \over |\Omega|} \int_\Omega u \, dx = \bar u.
\end{align}
Let us point out that in addition to a number of polymer systems
\cite{degennes79,stillinger83,nyrkova94}, this model is also
applicable to many other physical systems due to the fundamental
nature of the Coulombic non-local term in \eqref{EE}
\cite{lundqvist,emery93,chen93,glotzer95,nagaev95,m:pre02}.  Because
of this Coulomb interaction, we also like to think of $u$ as a density
of ``charge''.  

The Ohta-Kawasaki functional admits the following ``sharp-interface''
version:
\begin{align}
  \label{E}
  E[u] = \frac{\eps}{2} \int_\Omega |\nabla u| \, dx +
  \frac12\int_\Omega \int_\Omega (u(x) - \bar u) G(x, y) (u(y) - \bar
  u) \, dx \, dy,
\end{align}
where now $u : \Omega \to \{-1, +1\}$ and $G(x, y)$ is the {\em
  screened} Green's function of the Laplacian, i.e., it solves the
Neumann problem for the equation (distinguish from \eqref{G0})
\begin{align}
  \label{G}
  -\Delta G + \kappa^2 G = \delta(x - y), 
\end{align}
where $\kappa := 1/\sqrt{W''(1)} > 0$. Note also that in contrast to
the diffuse interface energy in \eqref{EE}, for the sharp interface
energy in \eqref{E} the charge neutrality constraint in \eqref{neutr}
is no longer imposed. This is due to the fact that in a minimizer of
the diffuse interface energy,  the charge of the minority phase is
expected to partially redistribute into the majority phase to ensure
screening of the induced non-local field (see a more detailed
discussion in the following section).  

The two terms in the energy \eqref{E} are competing: the second term
favors $u$ to be constant and equal to its average $\bar u$, but since
$u$ is valued in $\{+1, -1\}$ this means in effect that it is
advantageous for $u$ to oscillate rapidly between the two phases $u =
+1$ and $u = -1$; the first term penalizes the perimeter of the
interface between the two phases, and thus opposes too much spreading
and oscillation. The competition between these two selects a length
scale, which is a function of $\ep$.  In the diffuse interface version
\eqref{EE}, the sharp transitions between $\{u= +1\}$ and $\{u=-1\}$
are replaced by smooth transitions at the scale $\ep > 0$ as soon as
$\ep \ll 1$.

\begin{figure}
  \centering
  \includegraphics[width=13cm]{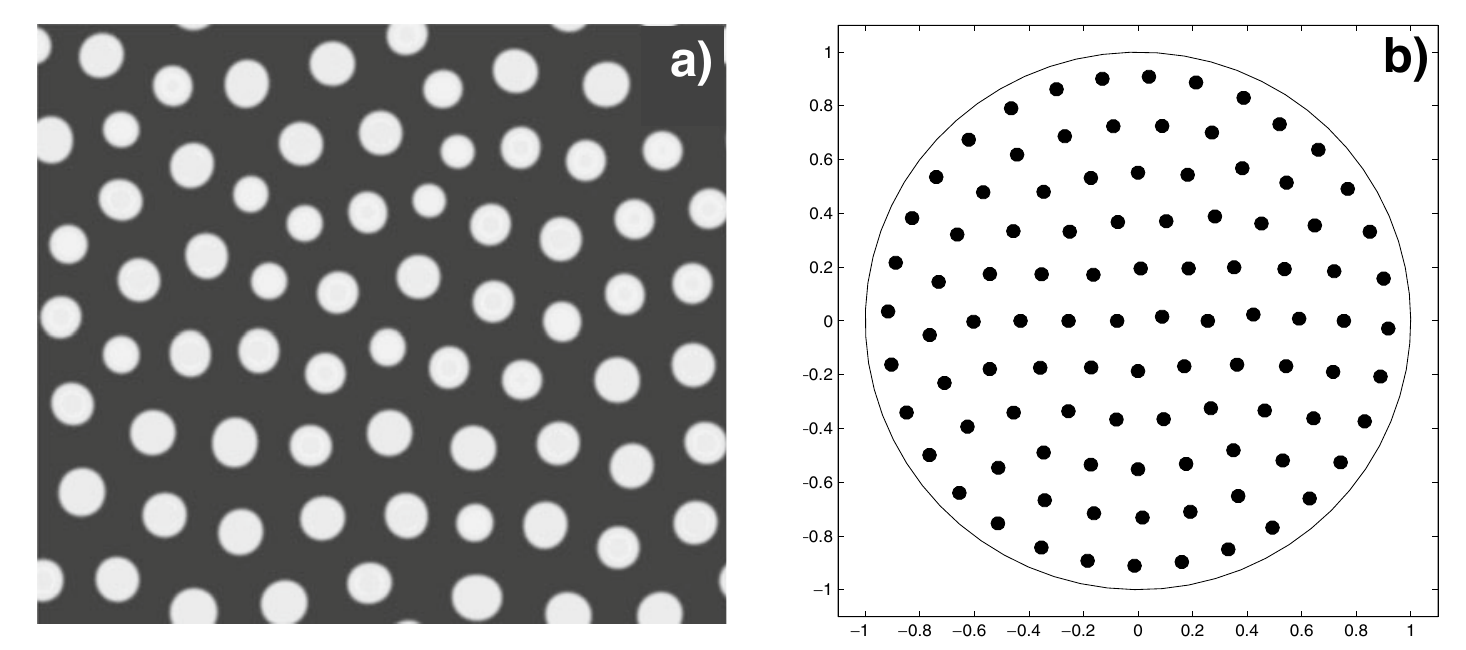}
  \caption{Two-dimensional multi-droplet patterns in systems with
    Coulombic repulsion: a local minimizer of the Ohta-Kawasaki energy
    on a rectangle with periodic boundary conditions; a local
    minimizer of the sum of two-point Coulombic potentials on a disk
    with Neumann boundary conditions. Taken from
    \cite{m:pre02,ren07rmp}. \label{f:drops} }
\end{figure}

In one space dimension and in the particular case $\bar u = 0$
(symmetric phases) the behavior of the energy can be understood from
the work of M\"uller \cite{muller93}: the minimizer $u$ is periodic
and alternates between $u = +1$ and $u = -1$ at scale $\ep^{1/3}$ (for
other one-dimensional results, see also
\cite{ren00trusk,ren00,yip06}). In higher dimensions the patterns of
minimizers are much more complex and are not well understood.  The
behavior depends on the volume fraction between the phases, i.e. on
the constant $\bar u$ chosen, and also on the dimension.  When
$\bar{u}<0$, we call $u=-1$ the majority phase and $u=+1$ the minority
phase, and conversely when $\bar{u}>0$. In two dimensions, numerical
simulations lead to expecting round ``droplets'' of the minority phase
surrounded by a ``sea'' of the majority phase (see Fig. \ref{f:drops})
for sufficient asymmetries between the majority and the minority
phases (i.e., for $\bar u$ sufficiently far away from zero)
\cite{ohta86,m:phd,m:pre02,ren07rmp}. The situation is less clear for
$\bar u$ close to zero, although it is commonly believed that in this
case the minimizers are one-dimensional stripe patterns
\cite{ohta86,m:phd,m:pre02,choksi11siads}.

In all cases, minimizers are intuitively expected to be periodic.
However, at the moment this seems to be very difficult to prove.  The
only general result in that direction to date is that of Alberti,
Choksi and Otto \cite{alberti09}, which proves that the energy of
minimizers of the sharp interface energy from \eqref{E} with no
screening (with $\kappa = 0$ and the neutrality condition from
\eqref{neutr}) is uniformly distributed in the limit where the size of
the domain $\Omega$ goes to infinity (see also
\cite{choksi01,spadaro09}). Their results, however, do not provide any
further information about the structure of the energy-minimizing
patterns.  Note in passing that the question of proving any
periodicity of minimizers for multi-dimensional energies is unsolved
even for systems of point particles forming simple crystals (see
e.g. \cite{lebris05,sandier12}), with a notable exception of certain
two-dimensional particle systems with short-range interactions which
somehow reduce to packing problems
\cite{theil06,wagner83,radin81}. Naturally, the situation can be
expected to be more complicated for pattern forming systems in which
the constitutive elements are ``soft'' objects, such as, e.g.,
droplets of the minority phase in the matrix of the majority phase in
the Ohta-Kawasaki model.

Here we are going to focus on the two-dimensional case and the
situation where one phase is in strong majority with respect to the
other, which is imposed by taking $\bar{u}$ very close to $-1$ as $\ep
\to 0$.  Thus we can expect a distribution of small droplets of $u=+1$
surrounded by a sea of $u = -1$.  In this regime, Choksi and Peletier
analyzed the asymptotic properties of a suitably rescaled version of
the sharp interface energy \eqref{E} with no screening in
\cite{choksi10}, as well as \eqref{EE} in \cite{choksi11}.  They work
in the setting of a fixed domain $\Omega$, and in a regime where the
number of droplets remains finite as $\ep \to 0$. They showed that the
energy minimizing patterns concentrate to a finite number of point
masses, whose magnitudes and locations are determined via a
$\Gamma$-expansion of the energy \cite{braides08}.  Here, in contrast,
we work in a regime where the number of droplets is divergent as $\ep
\to 0$.  We note that $\Gamma$-convergence of \eqref{EE} to the
functional \eqref{E} with no screening and for fixed volume fractions
was established by Ren and Wei in \cite{ren00}, who also analyzed
local minimizers of the sharp interface energy in the strong asymmetry
regime in two space dimensions \cite{ren07rmp}.

All these works are in the finite domain $\Omega$ setting, while we
are generally interested in the {\em large volume} (macroscopic)
limit, i.e., the regime when the number of droplets tends to infinity.
A rather detailed study of the behavior of the minimizers for the
Ohta-Kawasaki energy in macroscopically large domains was recently
performed in \cite{m:cmp10}, still in the regime of $\bar{u}$ close to
$-1$.  There the two-dimensional Ohta-Kawasaki energy was considered
in the case when $\Omega$ is a unit square with periodic boundary
conditions. The interesting regime corresponds to the parameters $\eps
\ll 1$ and $1 + \bar u = O(\eps^{2/3} |\ln \eps|^{1/3}) \ll 1$. It is
shown in \cite{m:cmp10} that under these assumptions on the parameters
and some technical assumptions on $W$, \eqref{E} gives the correct
asymptotic limit of the minimal energy in \eqref{EE}. Moreover, it is
shown that when $\bar \delta := \eps^{-2/3} |\ln \eps|^{-1/3} (1 +
\bar u)$ becomes greater than a certain critical constant
$\bar{\delta}_c$, the minimizers of $E$ in \eqref{E} consist of
$O(|\ln \eps|)$ simply connected, nearly round droplets of radius
$\simeq 3^{1/3} \eps^{1/3} |\ln \eps|^{-1/3}$, and uniformly
distributed throughout the domain \cite{m:cmp10}. Thus, the following
hierarchy of length scales is established in the considered regime:
\begin{align}
  \label{hier}
  \eps \ll \eps^{1/3} |\ln \eps|^{-1/3} \ll |\ln \eps|^{-1/2} \ll 1, 
\end{align}
where the scales above correspond to the width of the interface, the
radius of the droplets, the average distance between the droplets, and
the screening length, respectively. The multiscale nature of the
energy minimizing pattern is readily apparent from \eqref{hier}.

The analysis of \cite{m:cmp10} makes heavy use of the minimality
condition for \eqref{E} and, in particular, the Euler-Lagrange
equation associated with the energy. One is thus naturally led to
asking whether the qualitative properties of the minimizers
established in \cite{m:cmp10} (roundness of the droplets, identical
radii, uniform distribution) carry over to, e.g., almost minimizers of
$E$, for which no Euler-Lagrange equation is available. More broadly,
it is natural to ask how robust the properties of the energy
minimizing patterns are with respect to various perturbations of the
energy, for example, how the picture presented above is affected when
the charge density $\bar u$ is spatially modulated. A natural way to
approach these questions is via $\Gamma$-convergence. However, for a
multiscale problem such as the one we are considering the proper
setting for studying $\Gamma$-limits of the functionals in \eqref{EE}
or \eqref{E} is presently lacking. The purpose of this paper is to
formulate such a setting and extract the leading order term in the
$\Gamma$-expansion of the energy in \eqref{EE}. In our forthcoming
paper \cite{gms11b}, we obtain the next order term in the
$\Gamma$-expansion, using the method of ``lower bounds for 2-scale
energies'' via $\Gamma$-convergence introduced in \cite{sandier12}.

The main question for setting up the $\Gamma$-limit in the present
context is to choose a suitable metric for $\Gamma$-convergence. This
metric turns out to be similar to the one used for the analysis of
vortices in the two-dimensional magnetic Ginzburg-Landau model from
the theory of superconductivity \cite{sandier}. In fact, the problem
under consideration and its mathematical treatment (here as well as in
\cite{gms11b}) share several important features with the latter
\cite{sandier}. In the theory of superconductivity the role of
droplets is played by the Ginzburg-Landau vortices, which in the
appropriate limits also become uniformly distributed throughout the
domain \cite{sandier00}.  We note, however, that the approach
developed in \cite{sandier00,sandier} cannot be carried over directly
to the problem under consideration, since the vortices are more rigid
than their droplet counterparts: the topological degrees of the
vortices are quantized and can only take integer values, while the
droplet volumes are not. Thus we also have to consider the possibility
of many very small droplets. Developing a control on the droplet
volumes from above and below is one of the key ingredient of the
proofs presented below, and relies on the control of their perimeter
via the energy.

For simplicity, as in \cite{m:cmp10} we consider the energy defined on
a flat torus (a square with periodic boundary conditions).  The metric
we consider is the weak convergence of measures for a suitably
rescaled sequence of characteristic functions associated with droplets
(see the next section for precise definitions and statements of
theorems). Then, up to a rescaling, we show that both the energy
$\mathcal E$ from \eqref{EE} and $E$ from \eqref{E} $\Gamma$-converge
to a quadratic functional in terms of the limit measure, with the
quadratic term generated by the screened Coulomb kernel from \eqref{G}
and the linear term depending explicitly on $\bar\delta$ and
$\kappa$. To be more precise, we will see that in the regime we study,
there are two contributions to the energy which operate at leading
order: one contribution is linear in the density of the droplets and
corresponds to the ``self-interaction energy'' of each droplet coming
from both the perimeter term and self-interaction part of the double
integral in \eqref{E}, and the other is a quadratic term corresponding
to the interaction between the droplets, i.e. the rest of the
contribution of the double-integral term in \eqref{E}. This setting,
where both terms are of the same order of magnitude is very similar to
the regime of \cite{sandier00} and \cite[Chap. 7]{sandier} in the
context of the magnetic Ginzburg-Landau energy.

We note that the obtained limit variational problem is strictly convex
and its unique minimizer is a measure with constant density across the
domain $\Omega$. In particular, this implies equidistribution of mass
and energy for the minimizers of the diffuse interface energy
$\mathcal E$ in \eqref{EE} in the considered regime. In our companion
paper \cite{gms11b}, we further address the mutual arrangement of the
droplets in the energy minimizing patterns, using the formalism
developed recently for Ginzburg-Landau vortices \cite{sandier12}.  We
also obtain a characterization of the droplet shapes for almost
minimizers of the sharp interface energy $E$, which, in turn, allows
us to make the same conclusions about minimizers of the diffuse
interface energy $\mathcal E$ for $\eps \ll 1$, which is a new
result. The reason we can characterize the droplets at the diffuse
interface level is because the difference between the zero superlevel
set of the minimizers at the diffuse interface level and the jump set
of almost minimizers at the sharp interface level occurs essentially
on the length scale $\eps$ (interfacial thickness), which is much
smaller than the characteristic length scale $\eps^{1/3} |\ln
\eps|^{-1/3}$ of the droplets.

Let us mention other closely related systems from the studies of
ferromagnetism and superconductivity, where the role of droplets is
played by the slender needle-like domains of opposite magnetization in
a three-dimensional ferromagnetic slab at the onset of magnetization
reversal \cite{km:jns11}, or superconducting tunnels in a slab of
type-I superconducting material near the critical field
\cite{choksi04,choksi08}. It may be possible to obtain similar
$\Gamma$-convergence results with respect to convergence of measures
in the plane for those problems. At the same time, we point out that
extending our results to higher dimensions meets with serious
difficulties, since in the suitable limit the droplets in
higher-dimensional problems are expected to solve a non-local
isoperimetric problem whose solution is not well characterized at
present \cite{km:cpam13}.

Our paper is organized as follows. In
Sec. \ref{sec:statement-results}, we introduce the considered scaling
regime and state our main results; in Sec. \ref{sec:main} we prove the
$\Gamma$-convergence result in the sharp interface setting; in
Sec. \ref{sec:proof-theorem-almost} we prove the results on the
characterization of almost minimizers of sharp interface energy; and
in Sec. \ref{sec:proof-theor-refm} we treat the $\Gamma$-limit for the
case of the diffuse interface energy.

\paragraph{Some notations.} We use the notation $(u^\eps) \in \mathcal
A$ to denote sequences of functions $u^\eps \in \mathcal A$ as $\eps =
\eps_n \to 0$, where $\mathcal A$ is an admissible class.  For a
measurable set $E$, we use $|E|$ to denote its Lebesgue measure and
$|\partial E|$ to denote its perimeter (in the sense of De Giorgi). We
also use the notation $\mu \in \mathcal M^+ (\Omega)$ to denote a
non-negative Radon measure $\mu$ on the domain $\Omega$. With a slight
abuse, we will often speak of $\mu$ as the ``density'' on
$\Omega$. The symbols $H^1(\Omega)$, $BV(\Omega)$, $C(\Omega)$ and
$H^{-1}(\Omega)$ denote the usual Sobolev space, space of functions of
bounded variation, space of continuous functions, and the dual of
$H^1(\Omega)$, respectively.

\section{Statement of results}
\label{sec:statement-results}

Throughout the rest of the paper the parameters $\kappa > 0$,
$\bar\delta > 0$ and $\ell > 0$ are assumed to be fixed, and the
domain $\Omega$ is assumed to be a flat two-dimensional torus of side
length $\ell$, i.e., $\Omega = \mathbb T^2_\ell = [0, \ell)^2$, with
periodic boundary conditions. For every $\eps > 0$ we define
\begin{align}
  \label{ubeps}
  \bar u^\eps := -1 + \eps^{2/3} |\ln \eps|^{1/3} \bar\delta.
\end{align}
Under this scaling assumption the sharp interface version of the
Ohta-Kawasaki energy (cf. \eqref{E}) can be written as
\begin{equation}
  \label{E2}
  E^\eps[u] = \frac{\varepsilon}{2} \int_\TT  |\nabla u| \, dx +
  \frac{1}{2}  \int_\TT   (u-\bar u^\eps) (-\Delta +
  \kappa^2 )^{-1}(u-\bar u^\eps) \, dx, 
\end{equation}
for all $u \in \mathcal A$, where 
\begin{align}
  \label{A}
  \mathcal A := BV(\TT; \{-1, 1\}). 
\end{align}
We wish to understand the asymptotic properties of the energy $E^\eps$
in \eqref{E2} as $\eps \to 0$ when all other parameters are fixed. We
then relate our conclusions based on the study of this energy to its
diffuse interface version, which under the same scaling assumptions
takes the form
\begin{align}
  \label{EE2}
  \mathcal{E}^\eps[u] = \int_\TT \left( \frac{\eps^2}{2} |\nabla u|^2
    + W(u)\ + \frac{1}{2}(u- \bar u^\eps)(-\Delta) ^{-1} (u- \bar
    u^\eps) \right) dx,
\end{align}
with $u \in \mathcal A^\eps$, where
\begin{align}
  \label{Aeps}
  \mathcal A^\eps := \left\{ u \in H^1(\TT) : {1 \over \ell^2}
    \int_\TT u \, dx = \bar u^\eps \right\}.
\end{align}
Here the symmetric double-well potential $W \geq 0$ needs to satisfy
\begin{align}
  \label{W}
  W(1) = 0, \qquad W''(1) = {1 \over \kappa^2}, \qquad \int_{-1}^{1}
  \sqrt{2 W(u)} \, du = 1,
\end{align}
in order for $E^\eps$ to be compatible with $\mathcal E^\eps$ (see
further discussion at the beginning of Sec. \ref{sec:some-auxil-lemm}
and \cite[Sec. 4]{m:cmp10} for precise assumptions on $W$).  We note
that the relation between $E^\eps$ and $\mathcal E^\eps$ does not
amount to a straightforward application of the standard Modica-Mortola
argument \cite{modica77,modica87}, as will be explained in more detail
in Sec. \ref{sec:diff-interf-energy}. A formal application of the
latter to \eqref{EE2} would result in an energy of the type in
\eqref{E2}, but with the same (i.e., unscreened) Coulomb kernel as in
\eqref{EE2}, which is {\em not} $\Gamma$-equivalent to
$\mathcal{E}^\eps$. We also note that at the level of the energy
minimizers the relation between the two functionals was established in
\cite{m:cmp10}.

\subsection{Sharp interface energy}
\label{sec:sharp-interf-energy}

The sharp interface energy in \eqref{E2} is most conveniently
expressed in terms of {\em droplets}, i.e., the connected components
$\Omega_i^+$ of the set $\Omega^+ :=\{ u = +1\}$ (see Lemma
\ref{l:jord} for technical details).  Inserting
\begin{align}
  \label{chii}
  u = -1 + 2 \sum_i \chi_{\Omega_i^+},
\end{align}
into \eqref{E2}, where $\chi_{\Omega_i^+}$ are the characteristic
functions of $\Omega_i^+$, expressing the result via $G$ that solves
\begin{align}
  \label{G2}
  -\Delta G(x) + \kappa^2 G(x) = \delta(x) \qquad \text{in} \quad \TT,
\end{align}
expanding all the terms and using the fact that $\int_{\TT} G(x) dx =
\kappa^{-2}$, we arrive at (see also \cite{m:cmp10})
\begin{align}
  \label{EEOm}
  E^\eps[u] & = \frac{\ell^2 (1 + \bar u^\eps)^2}{2 \kappa^2} \notag \\
  & + \sum_i \Big\{ \eps |\partial \Omega_i^+| - 2 \kappa^{-2} (1 +
  \bar u^\eps) |\Omega_i^+| \Big\} + 2 \sum_{i,j} \int_{\Omega_i^+}
  \int_{\Omega_j^+} G(x - y) \, dx \, dy,
\end{align}
where we took into account the translational symmetry of the problem
in $\TT$. Moreover, since the optimal configurations for $\Omega_i^+$
are expected to consist of droplets of size of order $\eps^{1/3} |\ln
\eps|^{-1/3}$ (see \eqref{hier} and the discussion around), it is
convenient to introduce the rescaled area and perimeter of each
droplet:
\begin{align}
  \label{AP}
  A_i := \eps^{-2/3} |\ln \eps|^{2/3} |\Omega_i^+|, \qquad P_i :=
  \eps^{-1/3} |\ln \eps|^{1/3} |\partial \Omega_i^+|.
\end{align}
Similarly, let us introduce the suitably rescaled measure $\mu$
associated with the droplets:
\begin{align}
  \label{mu}
  d \mu(x) := \eps^{-2/3} |\ln \eps|^{-1/3} \sum_i
  \chi_{\Omega_i^+}(x) dx = \frac{1}{2} \eps^{-2/3} |\ln \eps|^{-1/3}(1+u)\, dx.
\end{align}
Note that by the definitions in \eqref{AP} and \eqref{mu} we have
\begin{align}
  \label{muA}
  {1 \over |\ln \eps|} \sum_i A_i = \int_\TT d \mu,
\end{align}
and the energy $E^\eps[u]$ may be rewritten as
\begin{align}
  E^\eps[u] = \eps^{4/3} |\ln \eps|^{2/3} \left( {\bar \delta^2 \ell^2
      \over 2 \kappa^2} + \bar E^\eps[u] \right), \label{EEE}
\end{align}
where
\begin{align}
  \bar E^\eps[u] := {1 \over |\ln \eps|} \sum_i \left( P_i - {2 \bar
      \delta \over \kappa^2} A_i \right) + 2 \int_\TT \int_\TT G(x -
  y) d \mu(x) d \mu(y). \label{Ebar}
\end{align}

We now state our $\Gamma$-convergence result, which is obtained for
configurations $(u^\eps)$ that obey the optimal energy scaling,
i.e. when $\bar E^\eps[u^\eps]$ remains bounded as $\eps \to 0$. The
result is obtained with the help of the framework established in
\cite{sandier00}, where an analogous result for the Ginzburg-Landau
functional of superconductivity was obtained. What we show is that the
limit functional $E^0$ depends only on the limit density $\mu$ of the
droplets (more precisely, on a limit measure $\mu \in \mathcal
M^+(\TT) \cap H^{-1}(\TT)$, see Lemma \ref{l:Hm1} for technical
details about such measures). In passing to the limit the second term
in \eqref{Ebar} remains unchanged, while the first term is converted
into a term proportional to the integral of the measure. The
proportionality constant is non-trivially determined by the optimal
droplet profile that will be discussed later on. We give the statement
of the result in terms of the original screened sharp interface energy
$E^\eps$, which is defined in terms of $u \in \mathcal A$.  In the
proof, we work instead with the equivalent energy $\bar E^\eps$, which
is defined through $\{A_i^\eps\}$, $\{P_i^\eps\}$ and $\mu^\eps$
corresponding to $u = u^\eps$ (cf. \eqref{EEE} and \eqref{Ebar}).

\begin{theorem} \textbf{\emph{($\Gamma$-convergence of
      $E^{\varepsilon}$)}}
  \label{main}
  \noindent Let $E^{\varepsilon}$ be defined by (\ref{E2}) with $\bar
  u^\eps$ given by \eqref{ubeps}. Then, as $\eps \to 0$ we have
  that \[ \eps^{-4/3} |\ln \eps|^{-2/3} E^{\varepsilon}
  \stackrel{\Gamma}\to E^0[\mu] := \frac{\bar \delta^2
    \ell^2}{2\kappa^2} + \left(3^{2/3} - \frac{2 \bar
      \delta}{\kappa^2} \right) \int_\TT d\mu + 2\int_\TT \int_\TT G(x
  - y) d \mu(x) d \mu(y),
\]
where $\mu \in \mathcal{M}^+(\TT) \cap H^{-1}(\TT)$.  More precisely,
we have
\begin{itemize} \item[i)] (Lower Bound) Let $(u^\eps) \in \mathcal A$
  be such that
  \begin{align}\label{ZO.1} \limsup_{\varepsilon \to 0}
    \eps^{-4/3} |\ln \eps|^{-2/3} E^{\varepsilon}[u^{\varepsilon}] <
    +\infty,
\end{align}
let 
\begin{align}
  \label{mueps}
  d \mu^\eps(x) := \tfrac12 \eps^{-2/3} |\ln \eps|^{-1/3} (1 +
  u^\eps(x)) dx,
\end{align}
and let $v^\eps$ satisfy
\begin{equation}\label{vdef}
  -\Delta v^{\varepsilon} + \kappa^2 v^{\varepsilon} =
  \mu^{\varepsilon} \qquad \text{in} \quad \TT. 
\end{equation}

Then, up to extraction of a subsequence, we have
\begin{align} 
  \mu^{\varepsilon} \rightharpoonup \mu \textrm{ in } (C(\TT))^*, \;\;
  v^{\varepsilon} \rightharpoonup v \textrm{ in } H^1(\TT), \nonumber
\end{align}
as $\eps \to 0$, where $\mu \in \mathcal{M}^+(\TT) \cap H^{-1}(\TT)$
and $v \in H^1(\TT)$ satisfy
\begin{equation}
  \label{PDE} -\Delta v + \kappa^2 v = \mu
  \qquad \text{in} \quad \TT.
\end{equation}
Moreover, we have
\[\liminf_{\varepsilon \to 0}  \eps^{-4/3} |\ln \eps|^{-2/3}
E^{\varepsilon}[u^{\varepsilon}] \geq E^0[\mu].\]

\item[ii)] (Upper Bound) Conversely, given $\mu \in \mathcal{M}^+(\TT)
  \cap H^{-1}(\TT)$ and $v \in H^1(\TT)$ solving \eqref{PDE}, there
  exist $(u^\eps) \in \mathcal A$ such that for the corresponding
  $\mu^{\varepsilon}$, $v^{\varepsilon}$ as in \eqref{mueps} and
  (\ref{vdef}) we have
\begin{align} 
  \mu^{\varepsilon} \rightharpoonup \mu \text{ in } (C(\TT))^*, \;\;
  v^{\varepsilon} &\rightharpoonup v \textrm{ in } H^1(\TT), \nonumber
\end{align}
as $\eps \to 0$, and
\[\limsup_{\varepsilon \to 0} \eps^{-4/3} |\ln \eps|^{-2/3}
E^{\varepsilon}[u^{\varepsilon}] \leq E^0[\mu].\]
\end{itemize}
\end{theorem}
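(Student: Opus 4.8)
The plan is to prove the $\Gamma$-convergence statement in two halves — lower bound and upper bound — following the strategy of Sandier-Serfaty for Ginzburg-Landau \cite{sandier00}, adapted to handle the fact that droplet volumes are not quantized. The key quantity is the rescaled measure $\mu^\eps$ and the associated potential $v^\eps$ solving \eqref{vdef}. The three pieces that must combine are: (a) the perimeter term $\frac{1}{|\ln\eps|}\sum_i P_i$, which should converge to $3^{2/3}\int_\TT d\mu$; (b) the linear ``volume'' term $-\frac{2\bar\delta}{\kappa^2}\frac{1}{|\ln\eps|}\sum_i A_i$, which by \eqref{muA} converges to $-\frac{2\bar\delta}{\kappa^2}\int_\TT d\mu$ provided the total mass is controlled; and (c) the nonlocal term $2\int_\TT\int_\TT G(x-y)\,d\mu(x)\,d\mu(y) = 2\|v^\eps\|^2$ in the relevant $H^1$-type norm, which passes to the limit by weak lower semicontinuity once we have $\mu^\eps \rightharpoonup \mu$ and $v^\eps \rightharpoonup v$.

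\textbf{Lower bound.} Starting from the energy bound \eqref{ZO.1}, first I would extract the a priori estimates: the bound on $\bar E^\eps[u^\eps]$ controls $\sum_i P_i$ and hence, via the isoperimetric inequality and the need for the linear term not to overwhelm the perimeter term, controls both $\sum_i A_i$ and the total perimeter at the scale $|\ln\eps|$. This gives tightness of $\mu^\eps$ and, through \eqref{vdef} and elliptic estimates, weak $H^1$-compactness of $v^\eps$; the limits $\mu$ and $v$ then satisfy \eqref{PDE}, and $\mu \in \mathcal M^+(\TT)\cap H^{-1}(\TT)$ by the energy bound (this is where Lemma \ref{l:Hm1} enters). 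The heart of the lower bound is the \emph{droplet-by-droplet analysis}: for each droplet $\Omega_i^+$ one shows that its self-energy contribution $\eps|\partial\Omega_i^+| - 2\kappa^{-2}(1+\bar u^\eps)|\Omega_i^+| + 2\int_{\Omega_i^+}\int_{\Omega_i^+}G$ is, after rescaling, bounded below by $3^{2/3}$ times its rescaled mass plus lower-order errors — the constant $3^{2/3}$ coming from optimizing $P - \frac{2\bar\delta}{\kappa^2}A$ over a single round droplet subject to the isoperimetric relation, i.e. minimizing $2\sqrt{\pi A} \cdot(\text{correction}) $; more precisely the optimal radius $\simeq 3^{1/3}\eps^{1/3}|\ln\eps|^{-1/3}$ yields the value $3^{2/3}$. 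One must also separate the self-interaction part of the double integral (which contributes to the $\eps|\ln\eps|$-scale self-energy, since $G(x-y)\sim -\frac{1}{2\pi}\ln|x-y|$ for small $|x-y|$) from the genuinely nonlocal part; the logarithmic divergence of $G$ at the origin is precisely what produces the $|\ln\eps|$ normalization and contributes to the effective self-energy constant. Summing over $i$, using \eqref{muA} and the lower semicontinuity of the nonlocal term, yields $\liminf \geq E^0[\mu]$.

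\textbf{Upper bound.} Given a target $\mu \in \mathcal M^+(\TT)\cap H^{-1}(\TT)$, I would first reduce to $\mu$ with smooth bounded density by a diagonal/density argument (the functional $E^0$ is continuous along such approximations in the relevant topology). Then one constructs $u^\eps$ explicitly by placing, on a fine mesh of spacing $\sim |\ln\eps|^{-1/2}$ (or any scale between the droplet size and $1$), round droplets of the optimal radius $3^{1/3}\eps^{1/3}|\ln\eps|^{-1/3}$, with local density of droplets matching $\mu$; each droplet then contributes exactly $3^{2/3}$ (to leading order) per unit rescaled mass to the self-energy, the linear term matches $-\frac{2\bar\delta}{\kappa^2}\int d\mu$, and the interaction term converges to $2\int\int G\,d\mu\,d\mu$ because $\mu^\eps \rightharpoonup \mu$ and the self-interaction (diagonal) part has already been accounted for in the self-energy. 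Checking $\mu^\eps\rightharpoonup\mu$ and $v^\eps\rightharpoonup v$ is routine given the construction.

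\textbf{Main obstacle.} The delicate point is the lower bound's droplet-by-droplet estimate, specifically handling droplets that are \emph{not} round and not of the expected size. Small or irregular droplets could in principle evade the $3^{2/3}$ self-energy lower bound; controlling them requires the perimeter bound together with a careful argument showing that any droplet's self-energy, regardless of shape, is at least $3^{2/3}$ times its rescaled volume up to errors that sum to $o(1)$ — this is exactly the ``control on droplet volumes from above and below via perimeter'' flagged in the introduction as a key new ingredient beyond the Ginzburg-Landau setting. A secondary technical difficulty is correctly splitting the double integral into self- and cross-interaction pieces uniformly in the droplet configuration, and showing the cross terms genuinely converge to the nonlocal functional of $\mu$ rather than leaking into the self-energy.
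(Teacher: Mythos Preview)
Your proposal is essentially correct and follows the same route as the paper: lower bound via a droplet-by-droplet self-energy estimate combined with a far-field/near-field splitting of the double integral, and upper bound via an explicit construction of optimally-sized round droplets on a mesh after reducing to smooth densities.

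One point of imprecision worth flagging: the constant $3^{2/3}$ does \emph{not} arise from optimizing $P - \tfrac{2\bar\delta}{\kappa^2}A$ alone (that expression is unbounded below for round droplets). It comes from combining the perimeter with the logarithmic self-interaction: after rescaling, the self-interaction contributes $\tfrac{1}{3\pi}A_i^2$ to leading order (from $G(x-y)\sim -\tfrac{1}{2\pi}\ln|x-y|$ and the factor $\tfrac13$ in $\ln\eps^{1/3}$), so the per-droplet self-energy is $\sqrt{4\pi A_i} + \tfrac{1}{3\pi}A_i^2 = A_i\,f(A_i)$ with $f(x)=2\sqrt{\pi}/\sqrt{x}+x/(3\pi)$, and $\min f = 3^{2/3}$ at $x=3^{2/3}\pi$. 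You clearly recognize later that the logarithmic self-interaction contributes to the constant, so this is a wording issue rather than a conceptual one; the paper makes this optimization explicit via the function $f$ and handles large droplets through a truncated area $\tilde A_i^\eps$ (so that $A_i^2$ can be replaced by $\tilde A_i^2 \leq c A_i$), which is the concrete mechanism behind the ``control from above'' you mention as the main obstacle.
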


We note that the limit energy $E^0$ obtained in Theorem \ref{main} may
be viewed as the {\em homogenized} (or mean-field) version of the
non-local part of the energy in the definition of $E^\eps$ associated
with the limit charge density $\mu$ of the droplets, plus a term
associated with the self-energy of the droplets. The functional $E^0$
is strictly convex, so there exists a unique minimizer $\bar \mu \in
\mathcal{M}^+(\TT) \cap H^{-1}(\TT)$ of $E^0$, which is easily seen to
be either $\bar\mu = 0$ for $\bar\delta \leq \tfrac12 3^{2/3}
\kappa^2$ or $\bar\mu = \tfrac12 (\bar\delta - \tfrac12 3^{2/3}
\kappa^2)$ otherwise. The latter can also be seen immediately from
Remark \ref{r:vloc} below, which gives a local characterization of the
limit energy $E^0$ (see Lemma \ref{l:Hm1}).

\begin{remark}
  \label{r:vloc} The limit energy $E^0$ in Theorem \ref{main} becomes
  local when written in terms of the limit potential $v$ defined in
  \eqref{PDE}:
  \begin{align}
    \label{Ev}
    E^0[\mu] = \frac{\bar \delta^2 \ell^2}{2\kappa^2} + \left(3^{2/3}
      \kappa^2 - 2 \bar \delta \right) \int_\TT v \, dx + 2\int_\TT
    \Big( |\nabla v|^2 + \kappa^2 v^2 \Big) dx.
  \end{align}
\end{remark}

Also, by the usual properties of $\Gamma$-convergence \cite{braides},
the optimal density $\bar \mu$ above is exhibited by the minimizers of
$E^\eps$ in the limit $\eps \to 0$, in agreement with \cite[Theorem
2.2]{m:cmp10}:

\begin{coro}
  \label{c:mubar}
  Let $\bar u^\eps$ be given by \eqref{ubeps} and let $(u^\eps) \in
  \mathcal A$ be minimizers of $E^\eps$ defined in \eqref{E2}. Then,
  letting $\bar \delta_c := \frac{1}{2} 3^{2/3} \kappa^2$, if
  $\mu^\eps$ is given by \eqref{mueps}, as $\eps \to 0$ we have
  \begin{itemize}
  \item[(i)] If $\bar\delta\le \bar\delta_c$, then
    \begin{align}
      \mu^\eps \rightharpoonup 0 \ \text{in } (C(\TT))^* \quad
      \text{and} \quad \eps^{-4/3} |\ln \eps|^{-2/3} \ell^{-2} \min
      E^\eps \to \frac{\bar \delta^2}{2\kappa^2}.
    \end{align}
  \item[(ii)] If $\bar\delta > \bar\delta_c$, then
    \begin{align}
      \mu^\eps \rightharpoonup \tfrac12 (\bar\delta - \bar \delta_c) \
      \text{in } (C(\TT))^* \quad \text{and} \quad \eps^{-4/3} |\ln
      \eps|^{-2/3} \ell^{-2} \min E^\eps \to \tfrac{\bar\delta_c}{2
        \kappa^2} (2 \bar\delta - \bar\delta_c).
    \end{align}
  \end{itemize}

  \end{coro}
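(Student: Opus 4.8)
The plan is to deduce the corollary directly from the $\Gamma$-convergence statement in Theorem \ref{main} together with the standard properties of $\Gamma$-convergence. First I would record the compactness ingredient: if $(u^\eps)$ are minimizers of $E^\eps$, then in particular they satisfy the energy bound \eqref{ZO.1}, since for the competitor $u \equiv -1$ (which lies in $\mathcal A$) one has $E^\eps[u] = \tfrac{\ell^2(1+\bar u^\eps)^2}{2\kappa^2} = \eps^{4/3}|\ln\eps|^{2/3}\,\tfrac{\bar\delta^2\ell^2}{2\kappa^2}$, so $\eps^{-4/3}|\ln\eps|^{-2/3}\min E^\eps \le \tfrac{\bar\delta^2\ell^2}{2\kappa^2}$ is bounded. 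Hence part i) of Theorem \ref{main} applies: along a subsequence $\mu^\eps \rightharpoonup \mu$ in $(C(\TT))^*$ for some $\mu \in \mathcal M^+(\TT)\cap H^{-1}(\TT)$, and $\liminf \eps^{-4/3}|\ln\eps|^{-2/3} E^\eps[u^\eps] \ge E^0[\mu]$.

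Next I would invoke minimality in the usual way. Since $u^\eps$ minimizes $E^\eps$, for the recovery sequence $(\tilde u^\eps)$ associated by part ii) to the minimizer $\bar\mu$ of $E^0$ we have $E^\eps[u^\eps] \le E^\eps[\tilde u^\eps]$, whence
\begin{align}
E^0[\mu] \le \liminf_{\eps\to0}\eps^{-4/3}|\ln\eps|^{-2/3}E^\eps[u^\eps] \le \limsup_{\eps\to0}\eps^{-4/3}|\ln\eps|^{-2/3}E^\eps[u^\eps] \le \limsup_{\eps\to0}\eps^{-4/3}|\ln\eps|^{-2/3}E^\eps[\tilde u^\eps] \le E^0[\bar\mu]. \nonumber
\end{align}
Since $\bar\mu$ is the minimizer of $E^0$ this forces $E^0[\mu] = E^0[\bar\mu] = \min E^0$, and by strict convexity of $E^0$ (noted just after Theorem \ref{main}) the minimizer is unique, so $\mu = \bar\mu$. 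The limit being independent of the subsequence, the full sequence $\mu^\eps \rightharpoonup \bar\mu$ converges; and the chain of inequalities also shows $\eps^{-4/3}|\ln\eps|^{-2/3}E^\eps[u^\eps] \to E^0[\bar\mu]$.

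It then remains to compute $\bar\mu$ and $\min E^0$ explicitly in the two regimes. Writing $c := 3^{2/3} - \tfrac{2\bar\delta}{\kappa^2}$, the functional $E^0[\mu] = \tfrac{\bar\delta^2\ell^2}{2\kappa^2} + c\int_\TT d\mu + 2\iint G(x-y)\,d\mu(x)d\mu(y)$ is minimized over $\mathcal M^+(\TT)\cap H^{-1}(\TT)$; testing with constant densities $\mu \equiv t \ge 0$ and using $\int_\TT G = \kappa^{-2}$ reduces this to minimizing $t \mapsto \tfrac{\bar\delta^2\ell^2}{2\kappa^2} + c\ell^2 t + 2\kappa^{-2}\ell^2 t^2$ over $t\ge0$, which gives $t = 0$ (hence $\bar\mu=0$) when $c\ge0$, i.e. $\bar\delta \le \tfrac12 3^{2/3}\kappa^2 = \bar\delta_c$, and $t = -\tfrac{c\kappa^2}{4} = \tfrac12(\bar\delta - \bar\delta_c)$ otherwise; that a constant density is actually optimal among all admissible $\mu$ follows from strict convexity and the translation invariance of $E^0$ on the torus (averaging over translations cannot increase $E^0$), an argument I would spell out via Jensen's inequality applied to the convex nonlocal quadratic form. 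Substituting back yields $\min E^0 = \tfrac{\bar\delta^2\ell^2}{2\kappa^2}$ in case (i) and $\min E^0 = \tfrac{\ell^2}{2\kappa^2}\bar\delta_c(2\bar\delta - \bar\delta_c)$ in case (ii) after simplification, which after dividing by $\ell^2$ gives exactly the stated limits of $\eps^{-4/3}|\ln\eps|^{-2/3}\ell^{-2}\min E^\eps$. The only mild subtlety — hence the main point requiring care — is justifying that the minimizer of $E^0$ is the constant density (equivalently, that a spatially modulated $\mu$ is never better), but this is immediate from strict convexity plus torus symmetry; everything else is the textbook "$\Gamma$-convergence $+$ compactness $\Rightarrow$ convergence of minimizers and minimum values" argument.
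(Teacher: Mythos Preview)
Your proposal is correct and follows essentially the same approach as the paper: the paper simply invokes ``the usual properties of $\Gamma$-convergence'' to deduce convergence of minimizers and minima from Theorem~\ref{main}, which is precisely the compactness-plus-$\Gamma$-convergence chain of inequalities you spell out. The only minor variation is in identifying the minimizer of $E^0$: the paper appeals to the local representation in Remark~\ref{r:vloc} (writing $E^0$ as a coercive quadratic in $v$), whereas you use translation invariance of the torus together with convexity (Jensen over translates) to reduce to constant densities --- both arguments are standard and equally short.
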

\noindent In particular, since the minimal energy scales with the area
of $\TT$, it is an {\em extensive quantity}.

We next give the definition of almost minimizers with prescribed limit
density, for which a number of further results may be obtained.  These
can be viewed, e.g., as almost minimizers of $E^\eps$ in the presence
of an external potential. We note that in view of the strict convexity
of $E^0$, minimizing $E^0[\mu] + \int_\TT \varphi(x) d \mu(x)$ for a
given $\varphi \in H^1(\TT)$ one obtains a one-to-one correspondence
between the minimizing density $\mu$ and the potential $\varphi$. It
then makes sense to talk about almost minimizers of the energy
$E^\eps$ with prescribed limit density $\mu$ by viewing them as almost
minimizers of $E^\eps + \int_\TT \varphi^\eps d \mu^\eps$, where
$\varphi^\eps = \eps^{2/3} |\ln \eps|^{1/3} \varphi$.  Also, observe
that almost minimizers with the particular prescribed density $\bar
\mu$ from Corollary \ref{c:mubar} are simply almost minimizers of
$E^\eps$.  Below we give a precise definition.
\begin{definition}
  \label{amin}
  For a given $\mu \in \mathcal{M}^+(\TT) \cap H^{-1}(\TT)$, we will
  call every recovery sequence $( u^\eps ) \in \mathcal A$ in Theorem
  \ref{main}(ii) {\sl almost minimizers of $E^\eps$ with prescribed
    limit density $\mu$.}
\end{definition}

For almost minimizers with prescribed limit density, we show that in
the limit $\eps \to 0$ most of the droplets, with the exception of
possibly many tiny droplets comprising a vanishing fraction of the
total droplet area, converge to disks of radius $r = 3^{1/3}
\eps^{1/3} |\ln \eps|^{-1/3}$. More precisely, we have the following
result.

\begin{theorem}
  \label{almost}
  Let $(u^\eps) \in \mathcal A$ be a sequence of almost minimizers of
  $E^\eps$ with prescribed limit density $\mu$. For every $\gamma \in
  (0,1)$ define the set $I_\gamma^\eps := \{ i \in \mathbb N: 3^{2/3}
  \pi \gamma \leq A_i^\eps \leq 3^{2/3} \pi \gamma^{-1} \}$. Then
 \begin{align}
   \label{cor1.1} &\lim_{\varepsilon \to 0} \frac{1}{|\ln
     \varepsilon|}\sum_i \left( P_i^\eps - \sqrt
     {4 \pi A_i^\eps} \right) = 0,\\
   &\label{cor1.2bis} \lim_{\varepsilon \to 0} \frac{1}{|\ln
     \varepsilon|} \sum_{i \in I^\eps_\gamma}
   \left( A_i^\eps - 3^{2/3}\pi \right)^2 =0, \\
   &\lim_{\varepsilon \to 0} \frac{1}{|\ln \varepsilon|} \sum_{i \not
     \in I^\eps_\gamma} A_i^\eps = 0,\label{cor1.2}
 \end{align}
 where $\{A_i^\eps\}$ and $\{P_i^\eps\}$ are given by \eqref{AP} with
 $u = u^\eps$.
\end{theorem}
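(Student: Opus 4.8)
The plan is to extract these statements from the lower bound analysis of Theorem \ref{main}(i), applied to almost minimizers, by tracking the "slack" in the chain of inequalities used to prove $\liminf \bar E^\eps \geq E^0[\mu]$. The key point is that $3^{2/3}$ is precisely the optimal value of the self-energy density: for a single droplet of rescaled area $A$ the per-droplet contribution $P_i - \tfrac{2\bar\delta}{\kappa^2} A_i$, after accounting for the $\kappa^{-2}$ factors and the way the quadratic interaction term absorbs the bulk, is bounded below by something like $\sqrt{4\pi A_i} - \tfrac{2\bar\delta}{\kappa^2}A_i + (\text{correction})$, and the function $A \mapsto \sqrt{4\pi A}$ evaluated against the constraint coming from the droplet volume budget $\tfrac1{|\ln\eps|}\sum_i A_i \to \int_\TT d\mu$ is minimized, after optimizing the area per droplet, exactly at $A_i = 3^{2/3}\pi$, giving the constant $3^{2/3}$. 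I would isolate the elementary inequality used in the lower bound — presumably of the form $P_i - \tfrac{2\bar\delta}{\kappa^2} A_i \geq \sqrt{4\pi A_i} - \tfrac{2\bar\delta}{\kappa^2} A_i$ together with the fact that $3^{2/3}$ arises from minimizing $f(A) := (A)^{-1}\sqrt{4\pi A} = \sqrt{4\pi/A}$ type comparison — and for almost minimizers note that equality must hold asymptotically in a summed/averaged sense.

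The concrete steps: (1) Recall from the proof of Theorem \ref{main} that $\bar E^\eps[u^\eps] \to E^0[\mu]$ for a recovery sequence (almost minimizer), and that the lower bound was obtained by writing $\tfrac1{|\ln\eps|}\sum_i (P_i^\eps - \tfrac{2\bar\delta}{\kappa^2}A_i^\eps) \geq \tfrac1{|\ln\eps|}\sum_i g(A_i^\eps) + o(1)$ for an explicit function $g$ with $g(A) \geq (3^{2/3} - \tfrac{2\bar\delta}{\kappa^2}A)$ pointwise after the area-budget normalization, with equality only at $A = 3^{2/3}\pi$. (2) Since for almost minimizers both sides converge to the same limit, the nonnegative quantities $\tfrac1{|\ln\eps|}\sum_i \big(g(A_i^\eps) - (3^{2/3} - \tfrac{2\bar\delta}{\kappa^2}A_i^\eps)\big)$ and $\tfrac1{|\ln\eps|}\sum_i (P_i^\eps - \sqrt{4\pi A_i^\eps})$ must tend to $0$; the latter is exactly \eqref{cor1.1}. (3) The vanishing of the first quantity, together with a quantitative (quadratic) lower bound $g(A) - (3^{2/3} - \tfrac{2\bar\delta}{\kappa^2}A) \geq c_\gamma (A - 3^{2/3}\pi)^2$ valid uniformly for $A \in [3^{2/3}\pi\gamma, 3^{2/3}\pi\gamma^{-1}]$, yields \eqref{cor1.2bis}. (4) For \eqref{cor1.2}, droplets with $A_i^\eps$ outside $[3^{2/3}\pi\gamma, 3^{2/3}\pi\gamma^{-1}]$ are penalized: either $A_i^\eps$ is very small (and then the isoperimetric/perimeter lower bound combined with the control on total perimeter from the energy bound — Lemma-type control developed earlier — forces $\tfrac1{|\ln\eps|}\sum A_i^\eps \to 0$ over such indices), or $A_i^\eps$ is large, in which case $g(A_i^\eps)$ exceeds $3^{2/3} - \tfrac{2\bar\delta}{\kappa^2}A_i^\eps$ by an amount $\gtrsim A_i^\eps$, so again the contribution to the area sum must vanish. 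Here I would use the perimeter bound $\tfrac1{|\ln\eps|}\sum_i P_i^\eps \leq C$ (a consequence of \eqref{ZO.1}) plus $P_i^\eps \geq \sqrt{4\pi A_i^\eps}$ to bound $\sum_{A_i \text{ large}} A_i$ and a lower cutoff argument for the small droplets.

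The main obstacle I expect is Step (4) for the \emph{small} droplets: the pointwise inequality $g(A) \geq 3^{2/3} - \tfrac{2\bar\delta}{\kappa^2}A$ degenerates as $A \to 0$ (both sides behave comparably, or the gain is only of lower order), so one cannot directly conclude that tiny droplets carry vanishing total area purely from the energy gap — one needs the independent a priori control on the number and size of droplets (the "control on droplet volumes from above and below via perimeter" emphasized in the introduction), presumably established in Sec. \ref{sec:main} before the proof of Theorem \ref{main}. Specifically, I anticipate needing an estimate saying that droplets with $A_i^\eps \to 0$ either are so few that $\tfrac1{|\ln\eps|}\sum A_i^\eps \to 0$, or contribute a definite excess to $\bar E^\eps$ per unit area beyond the claimed $3^{2/3}$, which again forces their total area to vanish. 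Making this dichotomy quantitative and uniform in $\eps$ — and checking that the $o(1)$ errors in the lower-bound construction of Theorem \ref{main} can be summed against possibly $\gg |\ln\eps|$ tiny droplets — is the delicate part; everything else is extracting equality cases from a convex/isoperimetric inequality, which is routine.
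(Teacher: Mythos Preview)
Your high-level strategy --- track the slack in the lower-bound chain of Theorem~\ref{main}(i) and show each nonnegative discarded term vanishes for almost minimizers --- is exactly the paper's approach (the paper packages this slack as a single quantity $M^\eps$ in \eqref{Meps} and shows in \eqref{Mlowerbound} that $M^\eps$ dominates all three sums in \eqref{cor1.1}--\eqref{cor1.2}). However, there is a real gap in your identification of the lower-bound function, and it is precisely the reason your Step~(4) looks hard. You write that $3^{2/3}$ ``arises from minimizing $f(A) := A^{-1}\sqrt{4\pi A} = \sqrt{4\pi/A}$'', but this function has no minimum; with only the perimeter term you cannot produce either the optimal area $3^{2/3}\pi$ or the constant $3^{2/3}$. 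The missing ingredient is the \emph{self-interaction} (diagonal) part of the Coulomb double integral: for each droplet,
\[
2\iint G(x-y)\,d\mu_i^\eps(x)\,d\mu_i^\eps(y)\;\ge\;\frac{|A_i^\eps|^2}{3\pi|\ln\eps|}\bigl(1-o(1)\bigr),
\]
coming from the logarithmic singularity of $G$ after the rescaling $x=\eps^{1/3}|\ln\eps|^{-1/3}\bar x$. Once this is extracted, the per-droplet contribution is (up to lower order) $A_i^\eps f(A_i^\eps)$ with $f(A)=2\sqrt\pi/\sqrt A + A/(3\pi)$, which is strictly convex and attains its minimum $3^{2/3}$ at $A=3^{2/3}\pi$.

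With this correction your small-droplet worry disappears entirely: for $A\le 3^{2/3}\pi\gamma$ one has $f(A)-3^{2/3}\ge c_\gamma>0$ (indeed $f(A)\to\infty$ as $A\to 0$), so $A_i^\eps\bigl(f(A_i^\eps)-3^{2/3}\bigr)\ge c_\gamma A_i^\eps$, and $M^\eps\to 0$ gives $\tfrac1{|\ln\eps|}\sum_{A_i^\eps<3^{2/3}\pi\gamma}A_i^\eps\to 0$ directly --- no separate a-priori count of tiny droplets is needed. For $A\in[3^{2/3}\pi\gamma,3^{2/3}\pi\gamma^{-1}]$ the quadratic Taylor bound on $f$ gives \eqref{cor1.2bis}, and the large-droplet case works as you sketched. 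The only technical device the paper adds is a truncated area $\tilde A_i^\eps$ (replacing $A_i^\eps$ by $c\sqrt{A_i^\eps}$ for very large droplets) so that the $o(1)$ error in the self-interaction estimate can be absorbed uniformly; this is bookkeeping, not a new idea.
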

\noindent Note that we may use the isoperimetric deficit terms present
in \eqref{cor1.1} to control the Fraenkel asymmetry of the
droplets. The Fraenkel asymmetry measures the deviation of the set $E$
from the ball of the same area that best approximates $E$ and is
defined for any Borel set $E \subset \mathbb R^2$ by
\begin{align}
  \label{fraenkel}
  \alpha (E)= \min \frac{| E\triangle B|}{|E|},
\end{align}
where the minimum is taken over all balls $B \subset \mathbb R^2$ with
$|B|=|E|$, and $\triangle$ denotes the symmetric difference between
sets. Note that the following sharp quantitative isoperimetric
inequality holds for $\alpha(E)$ \cite{fusco08}:
\begin{align}
  \label{isodef}
  |\partial E|- \sqrt{4\pi |E|} \ge C \alpha^2 (E) \sqrt{|E|},
\end{align}
with some universal constant $C > 0$.  As a direct consequence of
Theorem \ref{almost} and \eqref{muA}, we then have the following
result.

\begin{coro}
  \label{c:Nlim}
  Under the assumptions of Theorem \ref{almost}, when $\int_\TT d \mu
  > 0$ we have
 \begin{align}
   \label{Nlim}
   \lim_{\eps \to 0} {3^{2/3} \pi |I_\gamma^\eps| \over |\ln \eps|} =
   \int_\TT d \mu, \qquad \lim_{\eps \to 0} \,
   \frac{1}{|I^\eps_\gamma|} \sum_{i\in I_\gamma^\ep}     \alpha
   (\Omega_i^+)= 0,
 \end{align}
 where $ |I_\gamma^\eps|$ denotes the cardinality of $I_\gamma^\eps$.
\end{coro}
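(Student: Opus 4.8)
The plan is to derive both statements in \eqref{Nlim} purely from Theorem \ref{almost}, the elementary identity \eqref{muA}, and the weak-$*$ convergence $\mu^\eps \rightharpoonup \mu$ in $(C(\TT))^*$ which holds by Definition \ref{amin}. \emph{Step 1 (droplet count).} Testing the weak-$*$ convergence against the constant function $1 \in C(\TT)$ and using \eqref{muA} with $u=u^\eps$ gives $\frac{1}{|\ln\eps|}\sum_i A_i^\eps = \int_\TT d\mu^\eps \to \int_\TT d\mu$. Subtracting \eqref{cor1.2} yields $\frac{1}{|\ln\eps|}\sum_{i\in I_\gamma^\eps} A_i^\eps \to \int_\TT d\mu$. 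Since $A_i^\eps \ge 3^{2/3}\pi\gamma$ for $i\in I_\gamma^\eps$, this also shows that $|I_\gamma^\eps|/|\ln\eps|$ stays bounded. Now write $\sum_{i\in I_\gamma^\eps} A_i^\eps = 3^{2/3}\pi\,|I_\gamma^\eps| + \sum_{i\in I_\gamma^\eps}(A_i^\eps - 3^{2/3}\pi)$ and estimate the remainder by Cauchy--Schwarz:
\[
\frac{1}{|\ln\eps|}\Bigl|\sum_{i\in I_\gamma^\eps}(A_i^\eps - 3^{2/3}\pi)\Bigr| \le \Bigl(\frac{|I_\gamma^\eps|}{|\ln\eps|}\Bigr)^{1/2}\Bigl(\frac{1}{|\ln\eps|}\sum_{i\in I_\gamma^\eps}(A_i^\eps - 3^{2/3}\pi)^2\Bigr)^{1/2} \to 0,
\]
the convergence being a consequence of the boundedness just noted and \eqref{cor1.2bis}. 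Hence $3^{2/3}\pi\,|I_\gamma^\eps|/|\ln\eps| \to \int_\TT d\mu$; since $\int_\TT d\mu>0$ this forces $|I_\gamma^\eps|\to\infty$, so the averages below are eventually well defined.

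\emph{Step 2 (asymmetry).} The Fraenkel asymmetry is scale invariant and the isoperimetric deficit is positively $1$-homogeneous, so dividing \eqref{isodef} applied to $E=\Omega_i^+$ by the common factor $\eps^{1/3}|\ln\eps|^{-1/3}$ (recall \eqref{AP}) gives $P_i^\eps - \sqrt{4\pi A_i^\eps} \ge C\,\alpha^2(\Omega_i^+)\sqrt{A_i^\eps}$. For $i\in I_\gamma^\eps$ one has $\sqrt{A_i^\eps}\ge(3^{2/3}\pi\gamma)^{1/2}$, and every term $P_i^\eps-\sqrt{4\pi A_i^\eps}$ is nonnegative by the isoperimetric inequality, so
\[
\frac{1}{|\ln\eps|}\sum_{i\in I_\gamma^\eps}\alpha^2(\Omega_i^+) \le \frac{1}{C\,(3^{2/3}\pi\gamma)^{1/2}}\cdot\frac{1}{|\ln\eps|}\sum_i\bigl(P_i^\eps - \sqrt{4\pi A_i^\eps}\bigr) \to 0
\]
by \eqref{cor1.1}. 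Finally, Cauchy--Schwarz gives
\[
\frac{1}{|I_\gamma^\eps|}\sum_{i\in I_\gamma^\eps}\alpha(\Omega_i^+) \le \Bigl(\frac{|\ln\eps|}{|I_\gamma^\eps|}\Bigr)^{1/2}\Bigl(\frac{1}{|\ln\eps|}\sum_{i\in I_\gamma^\eps}\alpha^2(\Omega_i^+)\Bigr)^{1/2},
\]
and since $|\ln\eps|/|I_\gamma^\eps| \to 3^{2/3}\pi/\int_\TT d\mu \in(0,\infty)$ by Step 1, the right-hand side tends to $0$, which is the second assertion in \eqref{Nlim}.

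\emph{Main obstacle.} There is no substantial difficulty: the corollary is essentially bookkeeping on top of Theorem \ref{almost}. The two points requiring a little care are establishing the a priori bound on $|I_\gamma^\eps|/|\ln\eps|$ \emph{before} invoking Cauchy--Schwarz in Step 1 (this is what makes the counting limit go through), and tracking the $\eps$-scalings so that the scale-invariant inequality \eqref{isodef} is applied to the droplets in the rescaled form $P_i^\eps - \sqrt{4\pi A_i^\eps} \ge C\,\alpha^2(\Omega_i^+)\sqrt{A_i^\eps}$; both are routine.
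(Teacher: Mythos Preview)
Your proof is correct and follows exactly the approach the paper intends: the corollary is stated there as ``a direct consequence of Theorem \ref{almost} and \eqref{muA}'' with no further argument, and you have simply supplied the routine details (the Cauchy--Schwarz steps and the rescaling of \eqref{isodef}) that the paper leaves implicit.
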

\noindent This result generalizes the one in \cite{m:cmp10}, where it
was found that in the case of the minimizers {\em all} the droplets
are uniformly close to disks of the optimal radius $r = 3^{1/3}
\eps^{1/3} |\ln \eps|^{-1/3}$. What we showed here is that this result
holds for almost all droplets in the case of almost minimizers, in the
sense that in the limit almost all the mass concentrates in the
droplets of optimal area and vanishing isoperimetric deficit. We note
that the density $\mu$ is also the limit of the number density of the
droplets, up to a normalization constant, once the droplets of
vanishing area have been discarded.

The result that almost all droplets in almost minimizers with
prescribed limit density have asymptotically the {\em same} size, even
if the limit density is not constant in $\TT$ appears to be quite
surprising, since in this regime the self-interaction energy, which
governs the droplet shapes and partly their sizes is exactly of the
same order as the droplet mutual interaction energy, as was already
mentioned at the end of Sec. \ref{sec:intro}. In addition, the other
terms governing the droplets extracted in \eqref{Ebar} (the perimeter
and interaction with the background uniform charge) are equally
strong. This result would hold, for example, for minimizers of the
energy in the presence of a non-uniform potential, i.e., with a term
$\frac12 \eps^{2/3} |\ln \eps|^{1/3} \int_\TT \varphi(x) u(x) \, dx$
added to $E^\eps$ in \eqref{E2} (see also the paragraph before
Definition \ref{amin}). It means that while the density of the energy
minimizing droplets would be dependent on $\varphi$, their radii would
not. We note that this observation is consistent with the expectation
that quantum mechanical charged particle systems form Wigner crystals
at low particle densities \cite{lundqvist,wigner34,grimes79}. Let us
point out that the Ohta-Kawasaki energy $\mathcal E^\eps$ bears
resemblance with the classical Thomas-Fermi-Dirac-Von Weizs\"acker
model arising in the context of density functional theory of quantum
systems (see e.g. \cite{lebris05,lieb81,lundqvist}).

\subsection{Diffuse interface energy}
\label{sec:diff-interf-energy}
  
We now turn to relating the results obtained so far for the screened
sharp interface energy $E^\eps$ to the original diffuse interface
energy $\mathcal E^\eps$. On the level of the minimal energy, the
asymptotic equivalence of the energies in the considered regime,
namely, that for every $\delta > 0$
\begin{align}
  \label{aseq}
  (1 - \delta) \min E^\eps \leq \min \mathcal E^\eps \leq (1 + \delta)
  \min E^\eps
\end{align}
for $ \eps \ll 1$ was established in \cite[Theorem 2.3]{m:cmp10}. The
main idea of the proof in \cite{m:cmp10} is for a given function
$u^\eps \in \mathcal A^\eps$ to establish an approximate lower bound
for $\mathcal E^\eps[u^\eps]$ in terms of $(1 - \delta) E^\eps[\tilde
u^\eps]$ for some $\tilde u^\eps \in \mathcal A$, with $\delta > 0$
which can be chosen arbitrarily small for $\eps \ll 1$. The matching
approximate upper bound is then obtained by a suitable lifting of the
minimizer $u^\eps \in \mathcal A$ of $E^\eps$ into $\mathcal A^\eps$.

\begin{figure}
  \centering
  \includegraphics[width=8cm]{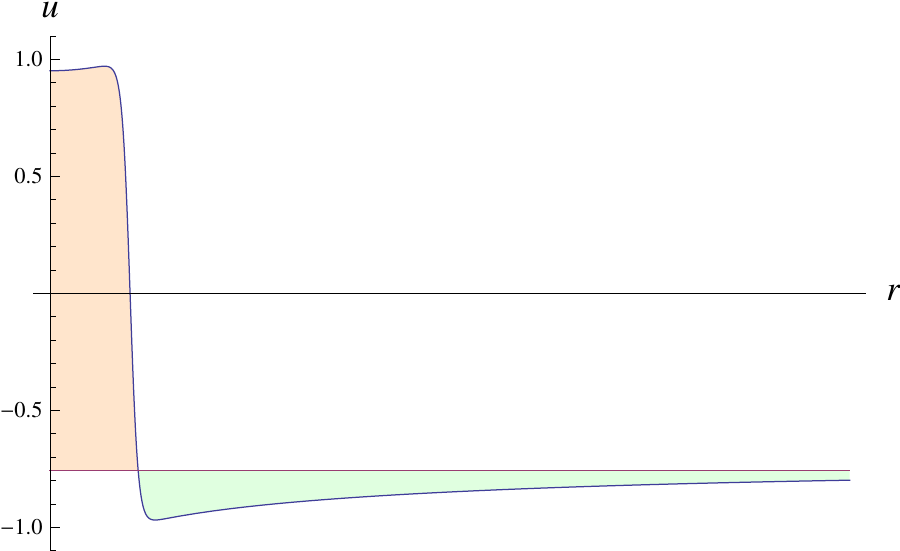}  
  \caption{A qualitative form of the $u$-profile for a single droplet
    from the Euler-Lagrange equation associated with $\mathcal E$. The
    horizontal line shows the level corresponding to $\bar u$. Charge
    is transferred from the region where $u < \bar u$ (depletion shown
    in green) to the region where $u > \bar u$ (excess shown in
    orange). At the sharp interface level the corresponding profile is
    given by $\text{sgn} (u)$, whose average charge is {\em not} equal
    to $\bar u$.}
  \label{fig:uvr}
\end{figure}

Here we show that the procedure outlined above may also be applied to
almost minimizers of $\mathcal E^\eps$ in a suitably modified version
of Definition \ref{amin} involving $\mathcal E^\eps$, using almost
minimizers of $E^\eps$ for comparisons. We note right away, however,
that it is not possible to simply replace $E^\eps$ with $\mathcal
E^\eps$ in Definition \ref{amin}. The reason for this is the presence
of the mass constraint in the definition of the admissible class
$\mathcal A^\eps$ for $\mathcal E^\eps$. This implies, for example,
that any sequence of almost minimizers $(u^\eps) \in \mathcal A^\eps$
of $\mathcal E^\eps$ must satisfy $\ell^{-2} \int_\TT d \mu^\eps =
\tfrac12 \bar\delta$, while, according to Corollary \ref{c:mubar}, for
sequences of almost minimizers $(u^\eps) \in \mathcal A$ of $E^\eps$
we have $\ell^{-2} \int_\TT d \mu^\eps \to \bar \mu \not= \tfrac12
\bar\delta$. This phenomenon is intimately related to the effect of
screening of the Coulombic potential from the droplets by the
compensating charges that move into their vicinity \cite{m:pre02}. For
a single radially symmetric droplet the solution of the Euler-Lagrange
equation associated with $\mathcal E^\eps$ has the form shown in
Fig. \ref{fig:uvr}, which illustrates the gap between the
``prescribed'' total charge at the diffuse interface level and the
total charge at the sharp interface level.

In order to be able to extract the limit behavior of the energy, we
need to take into consideration the redistribution of charge discussed
above and define almost minimizers with prescribed limit density that
belong to $\mathcal A^\eps$ and for which the screening charges are
removed from the consideration of convergence to the limit
density. Hence, given a candidate function $u^\eps \in \mathcal
A^\eps$, we define a new function
\begin{align}
  \label{ueps0}
  u^\eps_0(x) :=
  \begin{cases}
    +1, & u^\eps(x) > 0, \\
    -1, & u^\eps(x) \leq 0,
  \end{cases}
\end{align} 
whose jump set coincides with the zero level set of $u^\eps$. This
introduces a nonlinear filtering operation that eliminates the effect
of the small deviations of $u^\eps$ from $\pm 1$ in almost minimizers
on the limit density (compare also with \cite{km:jns11}). The measure
$\mu_0^\eps$ associated with the droplets is now defined via
\begin{align}
  \label{mu0eps}
  d \mu_0^\eps := \tfrac12 \eps^{-2/3} |\ln \eps|^{-1/3} (1 +
  u_0^\eps(x)) dx.
\end{align}

We can follow the ideas of \cite{m:cmp10} to establish an analog of
Theorem \ref{main} for the diffuse interface energy. To avoid many
technical assumptions, we formulate the result for a specific choice
of $W(u) = \frac{9}{32} (1 - u^2)^2$ and $\kappa = 1 / \sqrt{W''(1)} =
\frac23$ (see the discussion at the beginning of
Sec. \ref{sec:some-auxil-lemm}). A general result may easily be
reconstructed. Also, we make a technical assumption to avoid dealing
with the case $\limsup_{\eps \to 0} \|u^\eps\|_{L^\infty(\TT)} > 1$,
when spiky configurations in which $|u^\eps|$ significantly exceeds 1
in regions of vanishing size may appear. We note that this condition
is satisfied by the minimizers of $\mathcal E^\eps$ \cite[Proposition
4.1]{m:cmp10}.

\begin{theorem}
\textbf{\emph{($\Gamma$-convergence of
      $\mathcal E^{\varepsilon}$)}}
  \label{main2}
  \noindent Let $\mathcal E^{\varepsilon}$ be defined by (\ref{EE2})
  with $W(u) = \tfrac{9}{32} (1 - u^2)^2$ and $\bar u^\eps$ given by
  \eqref{ubeps}. Then, as $\eps \to 0$ we have that \[ \eps^{-4/3}
  |\ln \eps|^{-2/3} \mathcal E^{\varepsilon} \stackrel{\Gamma}\to
  E^0[\mu] := \frac{\bar \delta^2 \ell^2}{2\kappa^2} + \left(3^{2/3} -
    \frac{2 \bar \delta}{\kappa^2} \right) \int_\TT d\mu + 2\int_\TT
  \int_\TT G(x - y) d \mu(x) d \mu(y),
\]
where $\mu \in \mathcal{M}^+(\TT) \cap H^{-1}(\TT)$ and $\kappa =
\tfrac23$.  More precisely, we have
\begin{itemize} \item[i)] (Lower Bound) Let $(u^\eps) \in \mathcal
  A^\eps$ be such that $\limsup_{\eps \to 0}
  \|u^\eps\|_{L^\infty(\TT)} \leq 1$ and
  \begin{align}\label{2ZO.1} \limsup_{\varepsilon \to 0}
    \eps^{-4/3} |\ln \eps|^{-2/3} \mathcal
    E^{\varepsilon}[u^{\varepsilon}] < +\infty,
\end{align}
and let $\mu_0^\eps(x)$ be defined by \eqref{ueps0} and
\eqref{mu0eps}.

Then, up to extraction of subsequences, we have
\begin{align} 
  \mu_0^{\varepsilon} \rightharpoonup \mu \textrm{ in }
  (C(\TT))^*,\nonumber
\end{align}
as $\eps \to 0$, where $\mu \in \mathcal{M}^+(\TT) \cap H^{-1}(\TT)$.
Moreover, we have $\limsup_{\eps \to 0} \|u^\eps\|_{L^\infty(\TT)} =
1$ and
\[\liminf_{\varepsilon \to 0}  \eps^{-4/3} |\ln \eps|^{-2/3}
\mathcal E^{\varepsilon}[u^{\varepsilon}] \geq E^0[\mu].\]

\item[ii)] (Upper Bound) Conversely, given $\mu \in \mathcal{M}^+(\TT)
  \cap H^{-1}(\TT)$, there exist $(u^\eps) \in \mathcal A^\eps$ such
  that $\limsup_{\eps \to 0} \|u^\eps\|_{L^\infty(\TT)} = 1$ and for
  $\mu_0^{\varepsilon}$ defined by \eqref{ueps0} and \eqref{mu0eps} we
  have
\begin{align} 
  \mu_0^{\varepsilon} \rightharpoonup \mu \text{ in } (C(\TT))^*,
  \nonumber
\end{align}
as $\eps \to 0$, and
\[\limsup_{\varepsilon \to 0} \eps^{-4/3} |\ln \eps|^{-2/3}
\mathcal E^{\varepsilon}[u^{\varepsilon}] \leq E^0[\mu].\]
\end{itemize}
\end{theorem}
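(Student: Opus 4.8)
The plan is to deduce Theorem~\ref{main2} from Theorem~\ref{main} by adapting the sharp-interface/diffuse-interface comparison of \cite{m:cmp10} from the level of minimizers to that of arbitrary sequences obeying the energy bound. For $u\in\mathcal A^\eps$ I write $u_0\in\mathcal A$ for the filtered function \eqref{ueps0}, $\mu_0$ for the measure \eqref{mu0eps}, and for a general $w\in\mathcal A$ I set $\mu[w]:=\tfrac12\eps^{-2/3}|\ln\eps|^{-1/3}(1+w)\,dx$.

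\textbf{Lower bound.} Let $(u^\eps)\in\mathcal A^\eps$ satisfy $\limsup_\eps\|u^\eps\|_{L^\infty(\TT)}\le1$ and \eqref{2ZO.1}. First I would establish compactness: since $1+u^\eps>1$ on $\{u^\eps>0\}$ while $\int_\TT(1+u^\eps)\,dx=(1+\bar u^\eps)\ell^2=\eps^{2/3}|\ln\eps|^{1/3}\bar\delta\,\ell^2$, the $L^\infty$ hypothesis (using \eqref{2ZO.1} to absorb the $o(1)$ excess of $u^\eps$ over $1$, or after truncating at $\pm1$) gives $\mu_0^\eps(\TT)\le C$, so along a subsequence $\mu_0^\eps\rightharpoonup\mu$ in $(C(\TT))^*$ with $\mu\in\mathcal M^+(\TT)$; moreover $\|u^\eps\|_{L^\infty}\ge\ell^{-2}\bigl|\int_\TT u^\eps\,dx\bigr|=|\bar u^\eps|\to1$, which together with the hypothesis forces $\|u^\eps\|_{L^\infty}\to1$. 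Next I would attach to each $u^\eps$ a sharp-interface competitor $\tilde u^\eps\in\mathcal A$, following \cite{m:cmp10}, whose jump set differs from $\{u^\eps=0\}$ only on a set of vanishing rescaled measure (so $\mu[\tilde u^\eps]-\mu_0^\eps\rightharpoonup0$, hence $\mu[\tilde u^\eps]\rightharpoonup\mu$), and such that for every $\delta>0$ and all small $\eps$,
\[
  \mathcal E^\eps[u^\eps]\ \ge\ (1-\delta)\,E^\eps[\tilde u^\eps].
\]
Then $E^\eps[\tilde u^\eps]$ has bounded rescaled value, Theorem~\ref{main}(i) applies to $(\tilde u^\eps)$ and yields $\mu\in\mathcal M^+(\TT)\cap H^{-1}(\TT)$ together with $\liminf_\eps\eps^{-4/3}|\ln\eps|^{-2/3}E^\eps[\tilde u^\eps]\ge E^0[\mu]$, and letting $\delta\to0$ completes the lower bound.

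\textbf{Upper bound.} Given $\mu\in\mathcal M^+(\TT)\cap H^{-1}(\TT)$, I would take the recovery sequence $(\tilde u^\eps)\in\mathcal A$ of Theorem~\ref{main}(ii) — a union of essentially round, well-separated droplets with $\mu[\tilde u^\eps]\rightharpoonup\mu$ and $\limsup_\eps\eps^{-4/3}|\ln\eps|^{-2/3}E^\eps[\tilde u^\eps]\le E^0[\mu]$ — and lift it to $u^\eps\in\mathcal A^\eps$ by replacing each droplet's characteristic-function profile (and that of its exterior) with the corresponding radial diffuse profile coming from the Euler–Lagrange equation, i.e.\ smoothing the transition on scale $\eps$ and adjoining the screening depletion layer (Fig.~\ref{fig:uvr}), and finally shifting by a constant of size $O(\eps^{2/3}|\ln\eps|^{1/3})$ to enforce $\ell^{-2}\int_\TT u^\eps\,dx=\bar u^\eps$. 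This keeps $\|u^\eps\|_{L^\infty}\le1$ (hence $\to1$ by the $L^1$ lower bound above) and the zero set of $u^\eps$ within $O(\eps)$ of the jump set of $\tilde u^\eps$ (so $\mu_0^\eps\rightharpoonup\mu$), while the same estimates underlying \cite[Thm.~2.3]{m:cmp10} give $\mathcal E^\eps[u^\eps]\le(1+\delta)E^\eps[\tilde u^\eps]+o(\eps^{4/3}|\ln\eps|^{2/3})$ for every $\delta>0$ and all small $\eps$; sending $\eps\to0$ and then $\delta\to0$ yields the upper bound.

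\textbf{Main obstacle.} The hard part is the two-sided comparison between $\mathcal E^\eps$ and $E^\eps$: one must show that to leading order the diffuse energy with the \emph{unscreened} kernel $(-\Delta)^{-1}$ reduces, after optimizing the profile transverse to the interface and in the bulk, to the sharp-interface energy with the \emph{screened} kernel $(-\Delta+\kappa^2)^{-1}$ — which is precisely where the screening length $\kappa^{-1}$ and the constant $\int_\TT G=\kappa^{-2}$ enter. This requires a Modica–Mortola-type lower bound robust to the mass constraint and to the presence of the long-range term, a careful accounting of the screening charge using $W(u)\simeq\tfrac1{2\kappa^2}(1-|u|)^2$ near $u=\pm1$, and control ensuring that none of the modifications move the interface, so that the limit density is unaffected. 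The hypothesis $\limsup_\eps\|u^\eps\|_{L^\infty}\le1$ and the nonlinear filtering $u\mapsto u_0$ of \eqref{ueps0} are exactly what rules out spiky configurations and decouples the energetically cheap small bulk deviations of $u^\eps$ from $\pm1$ from the measure $\mu_0^\eps$. A secondary difficulty is that \cite{m:cmp10} performs this comparison only for minimizers, so the construction of $\tilde u^\eps$ (and its reverse in the upper bound) must be made quantitative and uniform along the sequence.
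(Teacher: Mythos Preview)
Your proposal is correct and follows essentially the same route as the paper, reducing both bounds to Theorem~\ref{main} via the sharp/diffuse comparison inequalities of \cite[Propositions~4.2 and~4.3]{m:cmp10}, which already apply to general configurations under suitable a~priori $L^\infty$ bounds (so your ``secondary difficulty'' is not really one). The only step you leave implicit but the paper spells out is the $L^\infty$ smallness of the unscreened potential $v^\eps=(-\Delta)^{-1}(u^\eps-\bar u^\eps)$, obtained from the energy bound by elliptic regularity plus Poincar\'e, which is a hypothesis needed to invoke \cite[Proposition~4.2]{m:cmp10}; conversely, your mass-constraint argument for the compactness of $\mu_0^\eps$ and for $\|u^\eps\|_{L^\infty}\to1$ is a bit more direct than the paper's.
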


Based on the result of Theorem \ref{main2}, we have the following
analog of Corollary \ref{c:mubar} for the diffuse interface energy
$\mathcal E^\eps$.

\begin{coro}
  \label{c:mubar2}
  Let $\bar u^\eps$ be given by \eqref{ubeps} and let $(u^\eps) \in
  \mathcal A^\eps$ be minimizers of $\mathcal E^\eps$ defined in
  \eqref{EE2} with $W(u) = \tfrac{9}{32} (1 - u^2)^2$. Then, letting
  $\kappa = \tfrac23$ and $\bar \delta_c := \frac{1}{2} 3^{2/3}
  \kappa^2$, if $u^\eps_0$ and $\mu_0^\eps$ are defined via
  \eqref{ueps0} and \eqref{mu0eps}, respectively, as $\eps \to 0$ we
  have
  \begin{itemize}
  \item[(i)] If $\bar\delta\le \bar\delta_c$, then
    \begin{align}
      \mu_0^\eps \rightharpoonup 0 \ \text{in } (C(\TT))^*, \quad
      \text{and} \quad \eps^{-4/3} |\ln \eps|^{-2/3} \ell^{-2} \min
      E^\eps \to \frac{\bar \delta^2}{2\kappa^2}.
    \end{align}
  \item[(ii)] If $\bar\delta > \bar\delta_c$, then
    \begin{align}
      \mu^\eps \rightharpoonup \tfrac12 (\bar\delta - \bar \delta_c) \
      \text{in } (C(\TT))^*, \quad \text{and} \quad \eps^{-4/3} |\ln
      \eps|^{-2/3} \ell^{-2} \min \mathcal E^\eps \to
      \tfrac{\bar\delta_c}{2 \kappa^2} (2 \bar\delta - \bar\delta_c).
    \end{align}
  \end{itemize}
\end{coro}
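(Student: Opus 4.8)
The plan is to obtain this corollary from Theorem \ref{main2} by the standard argument combining a $\Gamma$-convergence statement with minimality, together with the strict convexity of $E^0$ on $\mathcal M^+(\TT)\cap H^{-1}(\TT)$ recorded after Theorem \ref{main}; the evaluation of the limiting minimal energy is then identical to the computation behind Corollary \ref{c:mubar}.

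First I would check the hypotheses of the lower bound, Theorem \ref{main2}(i), for a sequence $(u^\eps)\in\mathcal A^\eps$ of minimizers of $\mathcal E^\eps$. By \cite[Proposition 4.1]{m:cmp10} such minimizers satisfy $\limsup_{\eps\to0}\|u^\eps\|_{L^\infty(\TT)}\le1$ for the present $W$. Moreover, for any $\mu'\in\mathcal M^+(\TT)\cap H^{-1}(\TT)$, the upper bound Theorem \ref{main2}(ii) provides a recovery sequence $(w^\eps)\in\mathcal A^\eps$ with $\limsup_{\eps\to0}\eps^{-4/3}|\ln\eps|^{-2/3}\mathcal E^\eps[w^\eps]\le E^0[\mu']$; since $u^\eps$ minimizes $\mathcal E^\eps$ over $\mathcal A^\eps$ and $w^\eps\in\mathcal A^\eps$,
\[
\limsup_{\eps\to0}\eps^{-4/3}|\ln\eps|^{-2/3}\mathcal E^\eps[u^\eps]\;\le\;E^0[\mu']\qquad\text{for all }\mu'\in\mathcal M^+(\TT)\cap H^{-1}(\TT).
\]
In particular \eqref{2ZO.1} holds, so Theorem \ref{main2}(i) yields, along any subsequence, a further subsequence with $\mu_0^\eps\rightharpoonup\mu$ in $(C(\TT))^*$ for some $\mu\in\mathcal M^+(\TT)\cap H^{-1}(\TT)$ and $\liminf_{\eps\to0}\eps^{-4/3}|\ln\eps|^{-2/3}\mathcal E^\eps[u^\eps]\ge E^0[\mu]$.

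Taking $\mu'=\mu$ in the display shows that along that subsequence the rescaled energy converges to $E^0[\mu]$, while letting $\mu'$ be arbitrary gives $E^0[\mu]\le E^0[\mu']$ for all admissible $\mu'$; hence $\mu$ minimizes $E^0$, and by strict convexity $\mu=\bar\mu$, the unique minimizer. As recalled after Theorem \ref{main} (equivalently, by minimizing the local form \eqref{Ev} of Remark \ref{r:vloc}), $\bar\mu=0$ if $\bar\delta\le\bar\delta_c:=\tfrac12 3^{2/3}\kappa^2$ and $\bar\mu$ is the constant density $\tfrac12(\bar\delta-\bar\delta_c)$ otherwise. Since the subsequence was arbitrary, the full sequence satisfies $\mu_0^\eps\rightharpoonup\bar\mu$ in $(C(\TT))^*$ and $\eps^{-4/3}|\ln\eps|^{-2/3}\mathcal E^\eps[u^\eps]\to E^0[\bar\mu]$. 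Finally, $E^0[0]=\bar\delta^2\ell^2/(2\kappa^2)$, while for $\bar\delta>\bar\delta_c$ one inserts the constant density $d\bar\mu=\tfrac12(\bar\delta-\bar\delta_c)\,dx$ into the formula for $E^0$, uses $\int_\TT G=\kappa^{-2}$ (so $\int_\TT\int_\TT G(x-y)\,dx\,dy=\ell^2\kappa^{-2}$) and the identity $3^{2/3}-2\bar\delta/\kappa^2=-\tfrac{2}{\kappa^2}(\bar\delta-\bar\delta_c)$ to obtain $E^0[\bar\mu]=\tfrac{\ell^2\bar\delta_c}{2\kappa^2}(2\bar\delta-\bar\delta_c)$. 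Dividing by $\ell^2$ gives the two stated limits, and specializing $\kappa=\tfrac23$ (so $\bar\delta_c=2\cdot3^{-4/3}$) completes the proof.

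The argument is essentially bookkeeping once Theorem \ref{main2} is in hand; the only genuinely external ingredient is the uniform $L^\infty$ bound on minimizers of $\mathcal E^\eps$ from \cite{m:cmp10}, which legitimizes the invocation of the lower bound, and one must record that the recovery sequences of Theorem \ref{main2}(ii) lie in $\mathcal A^\eps$ so that the comparison $\mathcal E^\eps[u^\eps]\le\mathcal E^\eps[w^\eps]$ is valid. The only mildly delicate point is the upgrade from subsequential to full-sequence convergence, which hinges on uniqueness of the minimizer of $E^0$, i.e. on its strict convexity.
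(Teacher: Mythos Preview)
Your argument is correct and is precisely the standard $\Gamma$-convergence route the paper has in mind: the corollary is stated without proof as an immediate analog of Corollary \ref{c:mubar} once Theorem \ref{main2} is established, and you have correctly supplied the one nontrivial ingredient (the uniform $L^\infty$ bound on minimizers from \cite[Proposition 4.1]{m:cmp10}) needed to invoke the lower bound. Your explicit evaluation of $E^0[\bar\mu]$ also matches the values recorded in Corollary \ref{c:mubar}.
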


\noindent In addition, we have the following analog of Theorem
\ref{almost}, which, in particular, applies to minimizers of the
diffuse interface energy $\mathcal E^\eps$.
\begin{theorem}
  \label{almost2}
  Let $(u^\eps) \in \mathcal A^\eps$ be a recovery sequence as in
  Theorem \ref{main2}(ii) and let $\int_\TT d \mu > 0$. Then there
  exists a set of finite perimeter $\Omega^+$ such that if
  $\Omega^+_i$ are its connected components, then the conclusion of
  Theorem \ref{almost} holds with $\{A_i^\eps\}$ and $\{P_i^\eps\}$
  given by \eqref{AP}, and
  \begin{align}
    \label{limOmpu}
    \lim_{\eps \to 0} { | \Omega^+ \triangle \{ u^\eps > 0 \} | \over
      |\Omega^+|} = 0.
  \end{align}
\end{theorem}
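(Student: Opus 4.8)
The plan is to reduce Theorem~\ref{almost2} to the sharp interface statement, Theorem~\ref{almost}, by passing from the diffuse recovery sequence $(u^\eps)$ to a sharp interface configuration carried by a superlevel set of $u^\eps$, exactly as in the comparison argument that underlies the proof of Theorem~\ref{main2}(i) (an adaptation of \cite[Theorem~2.3]{m:cmp10}). Recall that a recovery sequence as in Theorem~\ref{main2}(ii) satisfies $\limsup_{\eps}\|u^\eps\|_{L^\infty(\TT)} \le 1$, $\mu_0^\eps \rightharpoonup \mu$ and $\limsup_{\eps}\eps^{-4/3}|\ln\eps|^{-2/3}\mathcal E^\eps[u^\eps] \le E^0[\mu]$. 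I would first invoke, from Section~\ref{sec:proof-theor-refm}, the construction --- via the coarea formula and an averaging argument on the Modica--Mortola integrand --- of a level $t_\eps$ with $|t_\eps|\le\tfrac12$ such that $\Omega^{+,\eps} := \{u^\eps > t_\eps\}$ has finite perimeter and the function $\tilde u^\eps := -1 + 2\chi_{\Omega^{+,\eps}} \in \mathcal A$ satisfies $E^\eps[\tilde u^\eps] \le (1 + o(1))\,\mathcal E^\eps[u^\eps] + o(\eps^{4/3}|\ln\eps|^{2/3})$. As stressed in Section~\ref{sec:diff-interf-energy}, the nontrivial ingredient here is that the diffuse transition layers of $u^\eps$ produce the screening constant $\kappa^2$ in the effective kernel, so this is genuinely more than the standard Modica--Mortola reduction; we are allowed to take it for granted.

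Next I would verify that $(\tilde u^\eps)$ is an almost minimizer of $E^\eps$ with prescribed limit density $\mu$ in the sense of Definition~\ref{amin}. Since $W(u) = \tfrac{9}{32}(1-u^2)^2 \ge c_0 > 0$ on $[-\tfrac12,\tfrac12]$, the energy bound $\int_\TT W(u^\eps)\,dx \le \mathcal E^\eps[u^\eps] \le C\eps^{4/3}|\ln\eps|^{2/3}$ gives $|\{u^\eps \in [-\tfrac12,\tfrac12]\}| \le C\eps^{4/3}|\ln\eps|^{2/3}$, and since $|t_\eps|\le\tfrac12$ this forces $|\Omega^{+,\eps}\triangle\{u^\eps > 0\}| \le C\eps^{4/3}|\ln\eps|^{2/3}$. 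In particular, the rescaled measure $\mu^\eps$ of $\tilde u^\eps$ (namely $d\mu^\eps = \eps^{-2/3}|\ln\eps|^{-1/3}\chi_{\Omega^{+,\eps}}\,dx$, cf.\ \eqref{mu}) differs from $\mu_0^\eps$ of \eqref{ueps0}--\eqref{mu0eps} by a measure of total mass $O(\eps^{2/3}|\ln\eps|^{1/3}) \to 0$, so $\mu^\eps \rightharpoonup \mu$ in $(C(\TT))^*$; combined with $\limsup_{\eps}\eps^{-4/3}|\ln\eps|^{-2/3}E^\eps[\tilde u^\eps] \le E^0[\mu]$ from the first paragraph, this shows $(\tilde u^\eps)$ is a recovery sequence in the sense of Theorem~\ref{main}(ii). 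Applying Theorem~\ref{almost} to $(\tilde u^\eps)$, with $\Omega^{+,\eps}_i$ the connected components of $\Omega^{+,\eps}$ and $\{A_i^\eps\}$, $\{P_i^\eps\}$ their rescaled areas and perimeters from \eqref{AP}, we obtain precisely \eqref{cor1.1}, \eqref{cor1.2bis} and \eqref{cor1.2}.

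It remains to establish \eqref{limOmpu}. By \eqref{muA} and $\mu^\eps\rightharpoonup\mu$ with $\int_\TT d\mu > 0$, testing against the constant function gives $|\ln\eps|^{-1}\sum_i A_i^\eps \to \int_\TT d\mu > 0$, and hence $|\Omega^{+,\eps}| = \eps^{2/3}|\ln\eps|^{-2/3}\sum_i A_i^\eps \ge c\,\eps^{2/3}|\ln\eps|^{1/3}$ for all $\eps$ small. Dividing the bound $|\Omega^{+,\eps}\triangle\{u^\eps>0\}| \le C\eps^{4/3}|\ln\eps|^{2/3}$ of the previous paragraph by $|\Omega^{+,\eps}|$ yields $|\Omega^{+,\eps}\triangle\{u^\eps>0\}|\,/\,|\Omega^{+,\eps}| \le C'\eps^{2/3}|\ln\eps|^{1/3} \to 0$, which is \eqref{limOmpu}.

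The main obstacle is the first step: making sure that the sharp interface competitor produced by the Section~\ref{sec:proof-theor-refm}/\cite{m:cmp10} argument can indeed be taken to be $\tilde u^\eps = -1 + 2\chi_{\{u^\eps > t_\eps\}}$, up to a modification of negligible measure and with $|t_\eps|$ bounded away from $1$, while still obeying $E^\eps[\tilde u^\eps] \le (1+o(1))\mathcal E^\eps[u^\eps] + o(\eps^{4/3}|\ln\eps|^{2/3})$. Concretely one must check that the local truncations and component-by-component modifications used there --- to control the perimeter, to remove the depletion (screening) charge, and to enforce $\tilde u^\eps \in BV(\TT;\{-1,1\})$ --- alter $\{u^\eps > t_\eps\}$ only by a set of measure $o(\eps^{2/3}|\ln\eps|^{1/3})$, so that \eqref{limOmpu} is not destroyed; the hypothesis $\limsup_{\eps}\|u^\eps\|_{L^\infty(\TT)} \le 1$ is exactly what prevents thin spikes of $u^\eps$ above $1$ from inflating $\mu_0^\eps$. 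Everything downstream of this first step is a direct transcription of Theorem~\ref{almost} together with elementary measure estimates.
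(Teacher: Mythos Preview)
Your proposal is correct and follows essentially the same route as the paper's proof. The paper likewise passes from the diffuse recovery sequence to a sharp interface competitor $\tilde u_0^\eps \in \mathcal A$ via \cite[Proposition~4.2]{m:cmp10} (exactly as in Step~1 of the proof of Theorem~\ref{main2}), shows by sandwiching $\mathcal E^\eps[u^\eps]$ between $(1-\delta^{1/2})E^\eps[\tilde u_0^\eps]$ and $E^0[\mu]$ that $(\tilde u_0^\eps)$ is an almost minimizer of $E^\eps$ with prescribed density $\mu$ (after a diagonal in~$\delta$), applies Theorem~\ref{almost}, and obtains \eqref{limOmpu} from the bound $|\{\tilde u_0^\eps \neq u_0^\eps\}| \leq |\Omega_0^\delta| \leq C\eps^{4/3}|\ln\eps|^{2/3}\delta^{-2}$ together with $|\Omega^+| \sim \eps^{2/3}|\ln\eps|^{1/3}\int d\mu$; the ``main obstacle'' you flag is handled in the paper simply by noting that the jump set of $\tilde u_0^\eps$ produced in \cite[Lemma~4.1]{m:cmp10} lies in $\Omega_0^\delta$, so no further tracking of modifications is needed.
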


Theorem \ref{almost2} essentially says that the zero superlevel set of
$u^\eps$ from every recovery sequence of Theorem \ref{main2} may be
well approximated in $L^1$ sense by a union of of droplets that are,
in turn, close to disks of radius $r = 3^{1/3} \eps^{1/3} |\ln
\eps|^{-1/3}$ for $\eps \ll 1$. The $L^1$ error arises, since we do
not have control on the perimeter of every superlevel set of
$u^\eps$. At the same time, the choice of the zero superlevel set of
$u^\eps$ in the definition of the truncated version $u^\eps_0$ of
$u^\eps$ in \eqref{ueps0} was arbitrary. We could equivalently use the
superlevel set $\{u^\eps > c \}$ for any $c \in (-1,1)$ fixed. Also,
we point out that the conclusions of Corollary \ref{c:Nlim} remain
true for $\Omega^+$ in Theorem \ref{almost2} under the assumptions of
Theorem \ref{main2}. 

\section{Some auxiliary lemmas}
\label{sec:some-auxil-lemm}

In this section we collect some technical results that are needed in
the proofs of our theorems. Before proceeding to those results,
however, let us first show that the assumption in \eqref{W} that needs
to be imposed on $W$ in order to have $\Gamma$-equivalence between
$E^\eps$ and $\mathcal E^\eps$ defined in \eqref{E2} and \eqref{EE2},
respectively, and, hence, the conclusion of Theorem \ref{main2} (see
also \cite{m:cmp10}), is not restrictive. Indeed, given the definition
of $\mathcal E^\eps$ in \eqref{EE2}, introduce a rescaling:
\begin{align}
  \label{EEEre}
  \qquad W = \lambda^2 \widetilde W, \qquad \ell = \lambda \tilde
  \ell, \qquad \eps = \lambda^2 \tilde\eps.
\end{align}
Then it is easy to see that if $\tilde u(x) := u(\lambda x)$, then
$\mathcal E^\eps[u] = \lambda^4 \tilde{\mathcal E}^{\tilde\eps}[\tilde
u]$, where $\tilde{\mathcal E}^{\tilde\eps}$ is obtained from
\eqref{EE2} by replacing all the quantities with their tilde
equivalents. In particular, choosing $\lambda = 3 / (2 \sqrt{2})$ we
can relate the original Ohta-Kawasaki energy $\tilde{\mathcal
  E}^{\tilde\eps}$, which has $\widetilde W(u) = \frac14 (1 - u^2)^2$
\cite{ohta86}, to the energy appearing in the statement of Theorem
\ref{main2}. The choice of $W$ satisfying \eqref{W} simply avoids many
extra constants appearing in the statements of results.

As was already mentioned, the energy $E^\eps$ may be alternatively
written in terms of the level sets of $u$.  Indeed, when $E^\eps[u] <
+\infty$, the set $\Omega^+ := \{ u = +1 \}$ is a set of finite
perimeter (for precise definitions and the terminology used below, see
\cite{ambrosio01}).  We then have the following result about
decomposing $\Omega^+$ into measure theoretic connected components
$\Omega_i^+$, which in view of the scaling of the upper bound on
energy will be shown to hold for all sufficiently small $\eps > 0$.
Note that the latter assumption implies that each connected component
on the torus has the same geometric structure as connected components
of sets of finite perimeter in the whole plane, thus excluding a
possibility of stripe-like components winding around the torus and,
hence, justifying the use of the word ``droplet''.  We will also make
repeated use of the basic fact that the diameter of a connected
component is essentially controlled by its perimeter (i.e., modulo a
set of measure zero).

\begin{lem}
  \label{l:jord}
  Let $\Omega^+ \subset \TT$ be a set of finite perimeter, and assume
  that $|\Omega^+| \leq \frac{1}{160} \ell^2$ and $|\partial \Omega^+|
  \leq \frac{1}{10} \ell$. Then $\Omega^+$ may be uniquely decomposed
  (up to negligible sets) into an at most countable union of connected
  sets $\Omega_i^+$ of positive measure, which, after a suitable
  translation and extension to $\mathbb R^2$, are essentially bounded
  and whose essential boundaries $\partial^M \Omega_i^+$ are (up to
  negligible sets) at most countable unions of Jordan curves that are
  essentially disjoint. Furthermore, we have
\begin{equation}\label{diamOmi}
  \mathrm{ess} \, \diam  \Omega_i^+ \, \leq \frac12
  |\partial \Omega_i^+|. 
\end{equation}
\end{lem}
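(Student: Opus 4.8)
The statement is a structure theorem for sets of finite perimeter on the torus under smallness hypotheses, so the plan is to reduce everything to the classical decomposition theory for sets of finite perimeter in $\mathbb R^2$ and to the relative isoperimetric inequality on $\TT$. First I would recall the indecomposable-components decomposition for sets of finite perimeter (Ambrosio--Caselles--Masnou--Morel): any $\Omega^+ \subset \TT$ of finite perimeter can be written, uniquely up to negligible sets, as an at most countable disjoint union $\bigcup_i \Omega_i^+$ of indecomposable sets of positive measure with $|\partial \Omega^+| = \sum_i |\partial \Omega_i^+|$. The content of the lemma is then that, under the stated smallness of $|\Omega^+|$ and $|\partial \Omega^+|$, each $\Omega_i^+$ is ``small'' in a way that lets it be lifted to $\mathbb R^2$ as a genuinely bounded set whose essential boundary is a countable union of essentially disjoint Jordan curves, and that the diameter bound \eqref{diamOmi} holds.

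The key step is the diameter estimate, and the mechanism is the relative isoperimetric inequality on the torus together with a connectedness/no-winding argument. Fix a component $\Omega_i^+$. Since $|\Omega_i^+| \le |\Omega^+| \le \tfrac{1}{160}\ell^2$ and $|\partial \Omega_i^+| \le |\partial \Omega^+| \le \tfrac1{10}\ell$, I would argue that $\Omega_i^+$ cannot ``wrap around'' either generating circle of $\TT$: if it did, then slicing by the family of parallel circles in that direction and using the coarea/Fubini argument would force, for a set of positive-length set of slice parameters, either the slice of $\Omega_i^+$ or the slice of its complement to be large, and integrating the one-dimensional perimeter lower bound over these slices would contradict $|\partial \Omega_i^+| \le \tfrac1{10}\ell$ (the constant $\tfrac1{10}$ is chosen precisely to make this work). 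Hence, in each coordinate direction there is a circle disjoint (up to null sets) from $\Omega_i^+$, so cutting along these two circles unrolls $\TT$ and realizes $\Omega_i^+$ (after translation) as a bounded set $\widetilde\Omega_i^+ \subset \mathbb R^2$ with the same measure and perimeter, and indecomposable there. For a bounded indecomposable set of finite perimeter in the plane, the essential boundary is (up to null sets) a countable family of essentially disjoint rectifiable Jordan curves — this is again in Ambrosio--Caselles--Masnou--Morel — and the ``outer'' Jordan curve $J$ bounds a simply connected region $K \supseteq \widetilde\Omega_i^+$ with $|\partial^M \widetilde\Omega_i^+| \ge \mathcal H^1(J) \ge \mathrm{diam}\, K \ge \mathrm{ess\,diam}\,\widetilde\Omega_i^+$. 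To get the sharper constant $\tfrac12$ in \eqref{diamOmi}, note each essential boundary point lies on some Jordan curve and the total $\mathcal H^1$-length of the boundary is counted with the full ``two-sided'' multiplicity relative to the relative perimeter, so $\mathrm{ess\,diam}\,\widetilde\Omega_i^+ \le \mathrm{diam}\, K \le \tfrac12 \mathcal H^1(J) \le \tfrac12|\partial \Omega_i^+|$, using that a Jordan curve of length $L$ has diameter at most $L/2$.

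Finally I would assemble the pieces: uniqueness of the decomposition and the additivity $\sum_i |\partial\Omega_i^+| = |\partial\Omega^+| < \infty$ are inherited from the ACMM theorem and guarantee there are at most countably many components of positive measure; the translations used to unroll the torus are the ``suitable translation and extension to $\mathbb R^2$'' in the statement; and the essential disjointness of the Jordan curves of distinct components follows from the essential disjointness of the $\Omega_i^+$ themselves together with the planar structure theorem applied componentwise. I expect the main obstacle to be the no-winding argument — making rigorous, via coarea in one variable and Fubini, the claim that a component with perimeter below $\tfrac{1}{10}\ell$ must avoid a full circle in each direction, and checking the numerical constants $\tfrac1{160}$ and $\tfrac1{10}$ are consistent with this; everything else is a citation to the standard structure theory of $BV$ sets in the plane plus the elementary fact that a Jordan curve has diameter at most half its length.
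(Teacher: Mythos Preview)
Your argument is sound in outline but takes a different route from the paper, and there is one genuine soft spot.

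\medskip

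\textbf{What the paper does.} The paper does \emph{not} decompose on the torus. It takes the periodic extension $\Omega^+_\#\subset\mathbb R^2$, restricts to a square $K_R=(-R,R)^2$ with $R\in(\ell,\tfrac32\ell)$, and uses the coarea formula in the radial parameter $R$ to find a value for which $\mathcal H^1(\mathring\Omega^+_\#\cap\partial K_R)\le 16\ell^{-1}|\Omega^+|$. The hypotheses $|\Omega^+|\le\tfrac{1}{160}\ell^2$ and $|\partial\Omega^+|\le\tfrac1{10}\ell$ then force $|\partial(\Omega^+_\#\cap K_R)|\le\ell$. Now ACMM is applied \emph{in $\mathbb R^2$} to $\Omega^+_\#\cap K_R$; the diameter bound \eqref{diamOmi} is quoted directly from their structure results, and it implies each component lies (after translation) in $K_{\ell/4}$, so no component touches $\partial K_R$ and the decomposition descends to $\TT$.

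\medskip

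\textbf{What you do differently, and the soft spot.} You decompose first on $\TT$ and then unroll each component. The gap is that ACMM is stated and proved in $\mathbb R^n$, not on the torus, so you cannot simply ``recall'' it for $\TT$; the paper's lifting is precisely what supplies this. The clean fix is to run your no-winding/slicing argument on the \emph{whole} set $\Omega^+$ rather than on components: if for a.e.\ $c$ the circle $\{x_1=c\}$ meets $\Omega^+$ in positive measure, then (since $|\Omega^+|\le\tfrac1{160}\ell^2$ forces most slices to be nontrivial) integrating the one-dimensional variation gives $|\partial\Omega^+|\ge 2(\ell-\tfrac{\ell}{80})>\tfrac{\ell}{10}$, a contradiction. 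Hence there exist cutting circles in both directions avoiding $\Omega^+$; unroll $\TT$ along them, and \emph{then} invoke ACMM in $\mathbb R^2$. This is arguably more elementary than the paper's coarea-in-$R$ trick, and it makes the role of the constants $\tfrac1{160}$ and $\tfrac1{10}$ transparent.

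\medskip

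\textbf{A minor confusion.} Your remark that the boundary length is ``counted with two-sided multiplicity relative to the relative perimeter'' is incorrect: the perimeter equals $\mathcal H^1$ of the reduced boundary, counted once. Fortunately your chain $\mathrm{ess\,diam}\,\widetilde\Omega_i^+\le\mathrm{diam}\,K\le\tfrac12\mathcal H^1(J)\le\tfrac12|\partial\Omega_i^+|$ is correct anyway, since the outer Jordan curve $J$ is part of $\partial^M\widetilde\Omega_i^+$ and any closed rectifiable curve has diameter at most half its length.
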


\begin{proof}
  Let $\Omega^+_\#$ be the periodic extension of $\Omega^+$ from $\TT$
  to $\mathbb R^2$, and let $K_R := (-R, R)^2$. Then for every $R \in
  (\ell, \tfrac32 \ell)$ the set $\Omega^+_\# \cap K_R \subset \mathbb
  R^2$ is a set of finite perimeter, and we have
  \begin{align}
    \label{Pomper}
    |\partial (\Omega^+_\# \cap K_R)| \leq 9 |\partial \Omega^+| +
    \mathcal H^1(\mathring{\Omega}^+_\# \cap \partial K_R).
  \end{align}
On the other hand, by  the co-area formula we have 
\begin{align}
  \label{coarea}
  \int_\ell^{\frac32 \ell} \mathcal H^1(\mathring{\Omega}^+_\#
  \cap \partial K_t) dt = |\Omega^+_\# \cap K_{\frac32 \ell}
  \backslash K_\ell| \leq 8 |\Omega^+|.
\end{align}
Therefore, there exists $R \in (\ell, \tfrac32 \ell)$ such that
$\mathcal H^1(\mathring{\Omega}^+_\# \cap \partial K_R) \leq 16
\ell^{-1} |\Omega^+|$. Using the assumptions of the Lemma, we then
conclude that $\mathcal H^1(\mathring{\Omega}^+_\# \cap \partial K_R)
\leq \tfrac{1}{10} \ell$ and by \eqref{Pomper} we have $|\partial
(\Omega^+_\# \cap K_R)| \leq \ell$.

We now apply the results of \cite[Corollary 1 and Theorem
8]{ambrosio01} to the set $\Omega^+_\# \cap K_R$ to obtain its
decomposition into connected components and denote by $\Omega_i^+$
those components for which $|\Omega_i^+ \cap K_{\frac12 \ell}| >
0$. In turn, by \cite[Theorem 7 and Lemma 4]{ambrosio01} and noting
that in view of \cite[Proposition 6(ii)]{ambrosio01} it is sufficient
to consider only simple sets (see \cite[Definition 3]{ambrosio01}), we
have that $\Omega_i^+$ satisfy \eqref{diamOmi}. Therefore, from our
estimate on $|\partial (\Omega^+_\# \cap K_R)|$ we conclude that
$|\Omega_i^+ \cap K_{\frac32 \ell} \backslash K_\ell| = 0$, and so
$|\partial \Omega_i^+|$ does not have contributions from $\partial
K_R$.  Together with the assumptions of the Lemma, this then implies
that each $\Omega_i^+$ is essentially contained, after a suitable
translation, in $K_{\frac14 \ell}$. Finally, identifying all
translates of $\Omega_i^+$ by $\pm \ell$ in either coordinate
direction with the connected components of $\Omega^+$ in $\TT$, we
obtain the desired decomposition of $\Omega^+ \subset \TT$ for which
\eqref{diamOmi} also holds in the case of the perimeter relative to
$\TT$.
\end{proof}

In the context of $\Gamma$-convergence the sets $\Omega_i^+$ may be
viewed as a suitable generalization of the droplets introduced earlier
in the studies of energy minimizing patterns \cite{m:cmp10}. Note,
however, that the sets $\Omega_i^+$ lack the regularity properties of
the energy minimizers in \cite{m:cmp10} and may in general be fairly
ill-behaved (in particular, they do not have to be simply
connected). Nevertheless, they are fundamental for the description of
the low energy states associated with $E^\eps$ and, in particular,
will be shown to be close, in some average sense, to disks of
prescribed radii for almost minimizers of energy.

We now discuss the precise nature of the limit measures appearing in
our analysis. We say that $\mu \in \mathcal M^+(\TT) \cap
H^{-1}(\TT)$, if the non-negative Radon measure $\mu$ has bounded
Coulombic energy, i.e., if
\begin{align}
  \label{dmubdG}
  \int_\TT \int_\TT G(x - y) \, d \mu(x) \, d \mu(y) < \infty.
\end{align}
Our notation is justified by the following fundamental properties of
such measures.

\begin{lem}
  \label{l:Hm1}
  Let $\mu \in \mathcal M^+(\TT)$ and let \eqref{dmubdG} hold. Then
  \begin{itemize}
  \item[(i)] $\mu$ can be extended to a bounded linear functional over
    $H^1(\TT)$.
  \item[(ii)] If
    \begin{align}
      \label{vGconv}
      v(x) := \int_\TT G(x - y) \, d \mu(y),
    \end{align}
    then $v \in H^1(\TT)$. Furthermore, $v$ solves
    \begin{align}
      \label{vmu}
      -\Delta v + \kappa^2 v = \mu,
    \end{align}
    weakly in $H^1(\TT)$, and
    \begin{align}
      \label{dvmu}
      \nabla v(x) = \int_\TT \nabla G(x - y) \, d \mu(y),
    \end{align}
    in the sense of distributions. 
  \item[(iii)] If $v$ is as in (ii), we have $\kappa^2 \int_\TT v \,
    dx = \int_\TT d \mu$ and
    \begin{align}
      \label{Gmumu}
      \int_\TT \int_\TT G(x - y) \, d \mu(x) \, d \mu(y) = \int_\TT
      \left( |\nabla v|^2 + \kappa^2 v^2 \right) dx.
    \end{align}
  \end{itemize}
\end{lem}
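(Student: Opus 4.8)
The plan is to treat the three items in order, since (ii) and (iii) build on (i). For (i), I would start from the finiteness of the Coulombic energy \eqref{dmubdG} and show that $\mu$ defines a bounded functional on $H^1(\TT)$. The natural approach is to use the fundamental solution $G$ of $-\Delta + \kappa^2$ on the torus, which is positive and has a logarithmic singularity at the origin (plus being smooth away from it), and to write, for a test function $\phi \in C^\infty(\TT)$,
\[
\int_\TT \phi \, d\mu = \int_\TT \phi \, (-\Delta + \kappa^2)(-\Delta+\kappa^2)^{-1} d\mu
= \int_\TT \big( \nabla\phi \cdot \nabla v + \kappa^2 \phi\, v\big)\, dx,
\]
where $v$ is given by \eqref{vGconv}. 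So the key point is really to establish that $v \in H^1(\TT)$, which is the content of (ii); once that is in hand, Cauchy–Schwarz gives $|\int_\TT \phi\, d\mu| \le \|v\|_{H^1}\|\phi\|_{H^1}$ and density of $C^\infty(\TT)$ in $H^1(\TT)$ extends $\mu$ to a bounded functional. I would therefore prove (ii) first and deduce (i) from it.

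For (ii), the main work is to show $v \in H^1(\TT)$ and $\nabla v(x) = \int_\TT \nabla G(x-y)\, d\mu(y)$. I would approximate $\mu$ by mollification: set $\mu_\eta := \mu * \rho_\eta$ with a standard periodic mollifier, and let $v_\eta := G * \mu_\eta$, so that $-\Delta v_\eta + \kappa^2 v_\eta = \mu_\eta$ with $v_\eta$ smooth. Testing this equation against $v_\eta$ gives $\int_\TT(|\nabla v_\eta|^2 + \kappa^2 v_\eta^2)\, dx = \int_\TT v_\eta\, d\mu_\eta = \int_\TT\int_\TT G_\eta(x-y)\, d\mu(x)d\mu(y)$ where $G_\eta = G * \rho_\eta * \rho_\eta$; using positivity of $G$ and of $\hat\rho_\eta$ (or monotone/dominated convergence together with the singularity estimate $G(z) \le C|\log|z||$ near $0$), this is bounded by $\int_\TT\int_\TT G(x-y)\, d\mu(x)d\mu(y) + o(1) < \infty$ uniformly in $\eta$. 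Hence $(v_\eta)$ is bounded in $H^1(\TT)$; extracting a weak limit and checking (via $v_\eta \to v$ pointwise a.e., which follows from Fubini and continuity of $z \mapsto \int G(x-z+\cdot)$, or simply from $L^1$ convergence of $\mu_\eta \to \mu$ tested against the fixed kernel) identifies the limit with $v$ from \eqref{vGconv}. Passing to the limit in the weak formulation of $-\Delta v_\eta + \kappa^2 v_\eta = \mu_\eta$ against $\phi \in C^\infty(\TT)$ yields \eqref{vmu}, and \eqref{dvmu} follows from differentiating under the integral sign, justified by the local integrability of $\nabla G$ (which behaves like $|z|^{-1}$ near $0$, hence is in $L^1_{loc}$ in two dimensions) together with finiteness of $\mu(\TT)$.

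For (iii), I would integrate \eqref{vmu} over $\TT$: since $\int_\TT \Delta v\, dx = 0$ on the torus, $\kappa^2 \int_\TT v\, dx = \int_\TT d\mu$. For the energy identity \eqref{Gmumu}, I would test \eqref{vmu} against $v$ itself — now legitimate because $v \in H^1(\TT)$ and $\mu$ is a bounded functional on $H^1(\TT)$ by (i) — to get $\int_\TT(|\nabla v|^2 + \kappa^2 v^2)\, dx = \langle \mu, v\rangle = \int_\TT v\, d\mu = \int_\TT\int_\TT G(x-y)\, d\mu(x)d\mu(y)$, the last step by Fubini and the definition \eqref{vGconv}. The main obstacle is the first part of (ii): justifying the uniform $H^1$ bound on $v_\eta$ and the convergence $v_\eta \to v$ rigorously, since this is where the logarithmic singularity of $G$ and the mere measure regularity of $\mu$ interact; the cleanest route is probably to exploit the nonnegativity of $G$ on the torus (a maximum-principle fact for $-\Delta + \kappa^2$) so that all the quadratic quantities are monotone in $\eta$ and the passage to the limit is by monotone convergence rather than delicate estimates.
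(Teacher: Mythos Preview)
Your approach is correct and will work, but it is genuinely different from the paper's. You proceed by mollification: approximate $\mu$ by $\mu_\eta = \mu * \rho_\eta$, get uniform $H^1$ bounds on $v_\eta$ from the energy identity, and pass to the limit by weak compactness. The paper instead argues \emph{directly}, with no approximation. It introduces an explicit auxiliary kernel $H$ on $\TT$ (a periodized 3D Yukawa potential restricted to the plane) whose Fourier coefficients are $\widehat H(k) = (\kappa^2+|k|^2)^{-1/2}$, so that $G = H * H$. Then by Tonelli,
\[
\int_\TT b^2\,dx \;=\; \iint G(x-y)\,d\mu(x)\,d\mu(y), \qquad b := H * \mu,
\]
so $b\in L^2(\TT)$ immediately from \eqref{dmubdG}. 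A pointwise estimate $|\nabla G|\le C H$ then gives $h := \nabla G * \mu \in L^2$ directly, and a single Fubini argument shows $\nabla v = h$ distributionally. The $L^2$ bound on $v$ itself comes from $G*G \in L^\infty$ together with $\mu(\TT)<\infty$ (the latter following from $G\ge c>0$). Parts (i) and (iii) are then obtained exactly as you outline.

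What each buys: your route is the standard functional-analytic one and transports easily to other settings; its only delicate point is the uniform bound $\iint G_\eta\,d\mu\,d\mu \le C\iint G\,d\mu\,d\mu$, which you can indeed close either by the Fourier argument you sketch or by the domination $G_\eta(z)\le C\,G(z)$ (your parenthetical). The paper's route avoids approximation entirely and yields the $L^2$ bound on $\nabla v$ and the representation \eqref{dvmu} in one stroke, at the cost of introducing the square-root kernel $H$; note, incidentally, that this same factorization $G = H*H$ is the cleanest way to justify your uniform bound (since $\iint G\,d\mu_\eta\,d\mu_\eta = \|H*\mu_\eta\|_{L^2}^2 = \|b*\rho_\eta\|_{L^2}^2 \le \|b\|_{L^2}^2$ by Young's inequality).
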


\begin{proof}
  We first show that $v$ defined in \eqref{vGconv} has distributional
  first derivatives in $L^2(\TT)$. Introduce $H(x) > 0$ defined for
  all $x \in \TT$ by
  \begin{align}
    \label{Hsum}
    H(x) := {1 \over 2 \pi} \sum_{\mathbf n \in \mathbb Z^2}
    {e^{-\kappa | x - \mathbf n \ell|} \over |x - \mathbf n \ell|},
  \end{align}
  whose Fourier coefficients are easily seen to be
  \begin{align}
    \label{Hk}
    \widehat H(k) := \int_\TT e^{i k \cdot x} H(x) \, dx = {1 \over
      \sqrt{\kappa^2 + |k|^2}}, \qquad \qquad k \in 2 \pi \ell^{-1}
    \mathbb Z^2.
  \end{align}
  Indeed, $H(x)$ may be viewed as the trace $\widetilde H(x, 0)$ of
  the solution of
  \begin{align}
    \label{H3}
    -\Delta \widetilde H(x) + \kappa^2 \widetilde H(x) = 2
    \sum_{\mathbf n \in \mathbb Z^2 \times \{0\} } \delta(x - \mathbf
    n \ell), \qquad \qquad x \in \mathbb R^3,
  \end{align}
  which is given by the same formula as in \eqref{Hsum}. Denoting by
  $\widetilde H_k(z)$ the Fourier coefficients of $\widetilde H(x, z)$
  in $x \in \TT$, from \eqref{H3} one obtains that $\widetilde H_k(z)$
  solves
  \begin{align}
    \label{Hz}
    - \widetilde H_k''(z) + (\kappa^2 + |k|^2) \widetilde H_k(z) = 2
    \delta(z), 
  \end{align}
  whose explicit solution is $\widetilde H_k(z) = e^{-z \sqrt{\kappa^2
      + |k|^2}} / \sqrt{\kappa^2 + |k|^2}$.

  From \eqref{Hk} and the equation satisfied by $G$ one immediately
  concludes that
  \begin{align}
    \label{GH}
    G(x) = \int_\TT H(x - y) H(y) \, dy.
  \end{align}
  Furthermore, by direct inspection one can see that
  \begin{align}
    \label{GHbound}
    |\nabla G(x)| \leq C H(x) \qquad \qquad \forall x \in \TT,
  \end{align}
  for some $C > 0$. In addition, defining 
  \begin{align}
    \label{b}
    b(x) := \int_\TT H(x - y) \, d \mu(y),
  \end{align}
  by Tonelli's theorem and \eqref{GH} we have
  \begin{align}
    \label{bH}
    \int_\TT b^2 dx = \int_\TT \int_\TT \int_\TT H(x - z) H(y - z)
    \, d \mu(x) \, d \mu(y) \, dz \notag \\
    = \int_\TT \int_\TT G(x - y) \, d \mu(x) \, d \mu(y),
  \end{align}
  and, hence, by \eqref{dmubdG} we
  have $b \in L^2(\TT)$. Therefore, if
  \begin{align}
    \label{hG}
    h(x) := \int_\TT \nabla G(x - y) \, d \mu(y),
  \end{align}
  then by \eqref{GHbound} and \eqref{b} it is well defined, and we
  have $h \in L^2(\TT; \mathbb R^2)$ as well.

  Now, testing \eqref{vGconv} with $\nabla \varphi$, where $\varphi
  \in C^\infty(\TT)$, yields
  \begin{align}
    \label{nabphiv}
    -\int_\TT \nabla \varphi(x) v(x) \, dx = - \int_\TT \int_\TT
    \nabla \varphi(x) G(x - y) d \mu(y) = \int_\TT \varphi(x) h(x) \,
    dx,
  \end{align}
  which is justified by Fubini's theorem, in view of the fact that $h
  \in L^2(\TT; \mathbb R^2)$. Hence $\nabla v = h \in L^2(\TT; \mathbb
  R^2)$ distributionally, proving \eqref{dvmu}. To prove that $v \in
  H^1(\TT)$, observe that by Tonelli's theorem
  \begin{align}
    \label{vGG}
    \int_\TT v^2 dx = \int_\TT \int_\TT \int_\TT G(x - z) G(y - z) \,
    d \mu(x) \, d \mu(y) \, dz \leq C \left( \int_\TT d \mu \right)^2, 
  \end{align}
  for some $C > 0$. On the other hand, since by maximum principle
  $G(x) \geq c > 0$ for all $x \in \TT$, we conclude that
  \begin{align}
    \label{Gmumin}
    c \left( \int_\TT d \mu \right)^2 \leq \int_\TT \int_\TT G(x - y)
    \, d \mu(x) \, d \mu(y).
  \end{align}
  Therefore, by \eqref{dmubdG} we have that $\mu$ is bounded in the
  sense of measures, and so from \eqref{vGG} follows that $v \in
  L^2(\TT)$ as well.

We may next show that \eqref{vmu} holds distributionally by testing
$v$ in \eqref{vGconv} with $-\Delta \varphi + \kappa^2 \varphi \in
C^\infty(\TT)$ and integrating by parts.  Then, to conclude the proof
of the lemma, we test \eqref{vmu} with $\varphi \in C^\infty(\TT)$ and
apply the Cauchy-Schwarz inequality to obtain
  \begin{align}
    \label{vphimu}
    \left| \int_\TT \varphi \, d \mu \right| = \left| \int_\TT \left( \nabla
      \varphi \cdot \nabla v + \kappa^2 \varphi v \right) dx \right|
  \leq C \| v \|_{H^1(\TT)} \| \varphi \|_{H^1(\TT)}, 
  \end{align}
  for some $C > 0$. This yields (i), and, hence, \eqref{vmu} also
  holds weakly in $H^1(\TT)$. Finally, to obtain (iii), we interpret
  $\mu$ in \eqref{vmu} as an element of $H^{-1}(\TT)$ and test
  \eqref{vmu} with either 1 or $v$ itself.
\end{proof}

\begin{remark}
  It is not difficult to extend the proof of Lemma \ref{l:Hm1} to the
  case of measures with finite Coulombic energy defined on a
  sufficiently regular domain $\Omega$ with either Dirichlet or
  Neumann boundary conditions for the potential. In this case the role
  of $H$ would be played by the kernel of the Neumann-to-Dirichlet map
  for the operator $-\Delta + \kappa^2$ extended to $\Omega \times
  \mathbb R^+$.
\end{remark}

Observe that for the nontrivial minimizers we know from \cite{m:cmp10}
that $\bar E^\eps = O(1)$, $A_i = O(1)$ and $P_i = O(1)$ (and even
more precisely $A_i \simeq 3^{2/3} \pi $ and $P_i \simeq 2 \cdot
3^{1/3} \pi$), the number of droplets is $N = O(|\ln \varepsilon|)$,
and $\mu$ closely approximates the sum of Dirac masses at the droplet
centers with weights of order $|\ln \eps|^{-1}$. If, on the other
hand, the considered configurations only obey an energy bound under
the optimal scaling, then the same estimates turn out to hold for the
droplets on average. The precise result is stated in the following
lemma.

\begin{lem}
  \label{l:APmu}
  Let $(u^\eps) \in \mathcal A$, let $\limsup_{\varepsilon \to 0} \bar
  E^\eps[u^\eps] < +\infty$, and let $\{A_i^\eps\}$, $\{P_i^\eps\}$
  and $\mu^\eps$ be given by \eqref{AP} and \eqref{mu} with $u =
  u^\eps$. Then
  \begin{align} 
    \label{ZO.2} \limsup_{\varepsilon \to 0} \frac{1}{|\ln
      \varepsilon|} \sum_{i} P_i^\eps < +\infty, \qquad
    \limsup_{\varepsilon \to 0} \frac{1}{|\ln \varepsilon|} \sum_{i}
    A_i^\eps < +\infty, 
  \end{align}
  and 
  \begin{align}
    \label{mulinf}
    \limsup_{\varepsilon \to 0}  \int_\TT d \mu^\eps < +\infty.
  \end{align}
\end{lem}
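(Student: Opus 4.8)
The plan is to work directly with the \emph{exact} algebraic form of $E^\eps$ rather than with the decomposed expression \eqref{Ebar}, in order to sidestep the apparent circularity between the hypothesis $\limsup_\eps\bar E^\eps[u^\eps]<\infty$ (which is phrased through the droplets $\Omega_i^+$) and the very existence of the droplet decomposition. First I would record that, by \eqref{EEE} — equivalently, by inserting $u=u^\eps$ and using only $\int_\TT G\,dx=\kappa^{-2}$, so that no decomposition is needed — the hypothesis is equivalent to $E^\eps[u^\eps]\le C\eps^{4/3}|\ln\eps|^{2/3}$ for some $C>0$ and all sufficiently small $\eps$, and that with $\Omega^+:=\{u^\eps=+1\}$ one has the identity
\[
E^\eps[u^\eps]=\frac{\ell^2(1+\bar u^\eps)^2}{2\kappa^2}+\eps|\partial\Omega^+|-\frac{2(1+\bar u^\eps)}{\kappa^2}|\Omega^+|+2\int_{\Omega^+}\!\!\int_{\Omega^+}G(x-y)\,dx\,dy .
\]

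The key coercivity input is that $G\ge c>0$ on $\TT$ by the maximum principle (as used in the proof of Lemma \ref{l:Hm1}), so the double integral is bounded below by $2c|\Omega^+|^2$. Dropping the two nonnegative terms $\eps|\partial\Omega^+|$ and $\tfrac{\ell^2(1+\bar u^\eps)^2}{2\kappa^2}$ and recalling $1+\bar u^\eps=\eps^{2/3}|\ln\eps|^{1/3}\bar\delta$, the bound on $E^\eps[u^\eps]$ gives
\[
2c|\Omega^+|^2-\frac{2\bar\delta}{\kappa^2}\eps^{2/3}|\ln\eps|^{1/3}|\Omega^+|\le C\eps^{4/3}|\ln\eps|^{2/3}.
\]
Writing $|\Omega^+|=\eps^{2/3}|\ln\eps|^{1/3}t_\eps$ and dividing by $\eps^{4/3}|\ln\eps|^{2/3}$ turns this into $2c\,t_\eps^2-\tfrac{2\bar\delta}{\kappa^2}t_\eps-C\le 0$, so $t_\eps$ is bounded by a constant depending only on $c,C,\kappa,\bar\delta$; in particular $|\Omega^+|=O(\eps^{2/3}|\ln\eps|^{1/3})\to 0$. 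Returning to the identity and now keeping only $\eps|\partial\Omega^+|$ on the right, $\eps|\partial\Omega^+|\le E^\eps[u^\eps]+\tfrac{2\bar\delta}{\kappa^2}\eps^{2/3}|\ln\eps|^{1/3}|\Omega^+|=O(\eps^{4/3}|\ln\eps|^{2/3})$, hence $|\partial\Omega^+|=O(\eps^{1/3}|\ln\eps|^{2/3})\to 0$ as well. Thus for $\eps$ small the hypotheses $|\Omega^+|\le\tfrac1{160}\ell^2$, $|\partial\Omega^+|\le\tfrac1{10}\ell$ of Lemma \ref{l:jord} hold, the decomposition $\{\Omega_i^+\}$ exists, and $\sum_i|\Omega_i^+|=|\Omega^+|$, $\sum_i|\partial\Omega_i^+|=|\partial\Omega^+|$ by additivity of measure and perimeter over connected components.

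It then remains only to translate these two a priori estimates into the claimed ones: by \eqref{AP} and \eqref{muA}, $\int_\TT d\mu^\eps=\tfrac1{|\ln\eps|}\sum_iA_i^\eps=\eps^{-2/3}|\ln\eps|^{-1/3}|\Omega^+|=t_\eps$, which is bounded, giving \eqref{mulinf} and the area bound in \eqref{ZO.2}; while $\tfrac1{|\ln\eps|}\sum_iP_i^\eps=\eps^{-1/3}|\ln\eps|^{-2/3}|\partial\Omega^+|$, just shown bounded, gives the perimeter bound. (Once the decomposition is available one could alternatively read off the perimeter bound straight from \eqref{Ebar}, using $\int\!\!\int G\,d\mu\,d\mu\ge 0$ and the already–established area bound.) The one real obstacle is exactly the point treated in the second paragraph: the perimeter term alone is not coercive in the presence of the negative background-interaction term $-\tfrac{2(1+\bar u^\eps)}{\kappa^2}|\Omega^+|$, so one must first extract control of the area from the strictly positive quadratic Coulomb self-interaction — which is where $\inf_\TT G>0$ enters — and only afterwards control the perimeter; the rest is bookkeeping with the scaling exponents.
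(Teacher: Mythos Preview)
Your proof is correct and uses essentially the same argument as the paper: the strict positivity $G\ge c>0$ on $\TT$ (via the maximum principle, as in \eqref{Gmumin}) makes the Coulomb double integral dominate the negative linear background term, yielding a quadratic inequality that bounds the total mass, after which the perimeter bound follows by rearranging the energy. Your extra care in working first with $E^\eps$ and the undecomposed set $\Omega^+$---so that the hypotheses of Lemma \ref{l:jord} are verified before any droplet sums appear---is a legitimate logical refinement, though not strictly necessary since the sums $\sum_i A_i^\eps$, $\sum_i P_i^\eps$ and the measure $\mu^\eps$ in \eqref{Ebar} are in fact just rescalings of $|\Omega^+|$, $|\partial\Omega^+|$, and $\tfrac12(1+u)$, none of which actually require the decomposition.
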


\begin{proof}
  By \eqref{muA} and the positivity of $P_i^\eps$, we obtain the
  result, once we prove \eqref{mulinf}. To prove the latter, we simply
  note that if $\int_\TT d \mu^\eps \geq 2 \bar\delta / ( c
  \kappa^2)$, where $c$ is the same as in \eqref{Gmumin}, then by
  \eqref{muA} we have from the definition of $\bar E^\eps$ in
  \eqref{Ebar}:
  \begin{align}
    \label{GG0}
    \bar E^\ep[u] \geq - \frac{2\bar{\delta} }{\kappa^2 } \int_\TT d
    \mu^\eps + 2 c \left( \int_\TT d \mu^\eps \right)^2 \geq c \left(
      \int_\TT d \mu^\eps \right)^2,
  \end{align}
  which yields \eqref{mulinf}.
\end{proof}

\section{Proof of Theorem \ref{main}}
\label{sec:main}

Throughout all the proofs below, the values of $A_i^\eps$ and
$P_i^\eps$ are always the rescaled areas and perimeters, defined in
\eqref{AP}, of the connected components $\Omega_i^+$ of $\Omega^+ = \{
u = +1 \}$ for a given $u = u^\eps$, as in Lemma \ref{l:APmu}.  The
presentation is clarified by working with the rescaled energy $\bar
E^{\eps}$ defined by \eqref{Ebar} rather than $E^{\eps}$ directly. We
begin by proving Part i) of Theorem \ref{main}, the lower bound.

\subsection{Proof of lower bound, Theorem \ref{main} i)}

\bigskip

\noindent {\em Step 1: Estimate of $\bar E^{\varepsilon}$ in terms of
  $A_i^\eps$ and $P_i^\eps$.
}\\

\noindent First, for a fixed $\gamma \in (0,1)$ we define a
\emph{truncated} rescaled droplet area:
\begin{equation} \label{P1.N34} \tilde A_i^\eps := 
  \begin{cases}
    A_i^\eps, & \text{if } \ A_i^\eps <  3^{2/3} \pi \gamma^{-1} \, \\
    (3^{2/3} \pi \gamma^{-1})^{1/2} |A_i^\eps|^{1/2} &\text{if } \
    A_i^\eps \geq 3^{2/3} \pi \gamma^{-1},
  \end{cases}
\end{equation} 
and the isoperimetric deficit
\begin{equation}\label{isodefzero}
  I_\mathrm{def}^{\varepsilon} := \frac{1}{|\ln \varepsilon|} \sum_{i}
  \left(  P_i^\eps - \sqrt{4 \pi   A_i^\eps} \, \right) \geq 0,
\end{equation}
which will be used throughout the proof. The purpose of defining the
truncated droplet area in (\ref{P1.N34}) will become clear later. 
 
We start by writing $ \mu^\eps = \sum_i \mu_i^\eps$, with
 \begin{align} 
   \label{muieps}
   d \mu_i^{\varepsilon}(x) := \eps^{-2/3} |\ln \eps|^{-1/3}
   \chi_{\Omega_i^+} (x) dx, 
\end{align}
where $\Omega_i^+$ are the connected components of $\Omega^+ =
\{u^\eps = +1\}$, and the index $\eps$ was omitted from $\Omega_i^+$
to avoid cumbersome notation.  For small enough $\eps$ this is
justified by Lemma \ref{l:jord}, in view of the fact that for some $C
> 0$ we have
\begin{align}
  \label{pOmell}
  |\partial \Omega^+| \leq \eps^{-1} E^\eps[u^\eps]\leq C \eps^{1/3}
  |\ln \eps|^{2/3} ,
\end{align}
so $|\partial \Omega_i^+| \ll \ell$ whenever $\eps \ll 1$. In
particular, \eqref{diamOmi} holds for $\Omega_i^+$ when $\eps$ is
sufficiently small.

For a fixed $\rho > 0$ we introduce the ``far field truncation''
$G_\rho \in C^\infty(\TT)$ of the Green's function $G$:
\begin{align}
  \label{Geta}
  G_\rho (x - y) := G(x - y) \phi_\rho(|x - y|) \qquad \forall (x, y)
  \in \TT \times \TT,
\end{align}
where $\phi_\rho \in C^\infty(\mathbb R)$ is a monotonically
increasing cutoff function such that $\phi_\rho(t) = 0$ for all $t <
\frac12 \rho$ and $\phi_\rho(t) = 1$ for all $t > \rho$.  Then, for
sufficiently small $\eps$ we have $|\partial \Omega_i^+| \leq \rho$ in
view of \eqref{pOmell}, and from \eqref{Ebar} and \eqref{diamOmi} we
obtain
\begin{align} 
  \bar E^{\varepsilon}[u^{\varepsilon}] \geq
  &I_\mathrm{def}^{\varepsilon} + \frac{1}{|\ln \varepsilon|}
  \left(\sum_i \sqrt{4 \pi A_i^\eps} - \frac{2\bar \delta}{\kappa^2}
    A_i^\eps\right) \nonumber \\ &+ 2\sum_i \iint
  G(x-y)d\mu_i^{\varepsilon}(x)d\mu_i^{\varepsilon}(y)  \label{EGrho}\\
  &+ 2\iint G_\rho(x-y)d\mu^{\varepsilon}(x) d\mu^{\varepsilon}(y),
  \nonumber
\end{align}
where we used \eqref{diamOmi} and the positivity of $G$
(cf. e.g. \cite{m:cmp10}), and here and everywhere below we omit $\TT
\times \TT$ as the domain of integration for double integrals to
simplify the notation.

We recall that the Green's function for $-\Delta + \kappa^2$ on $\TT$
can be written as $G(x-y) = -\frac{1}{2\pi} \ln |x-y| + O(|x-y|)$
\cite{m:cmp10}. With the help of this fact, together with
\eqref{pOmell} and \eqref{diamOmi}, for $\eps$ sufficiently small we
have the following estimate for the self-interaction energy:
\begin{align}
  \bar E_\mathrm{self}^\eps & := 2\sum_i \iint
  G(x-y)d\mu_i^{\varepsilon}(x)d\mu_i^{\varepsilon}(y) \nonumber
  \\ 
  &\geq -\frac{1}{\pi} \sum_i \iint \( \ln |x-y| + C \)
  d\mu_i^{\varepsilon}(x)d\mu_i^{\varepsilon}(y). \nonumber \\
  & = -\frac{1}{\pi |\ln \varepsilon|^2}\sum_i
  \int_{\overline{\Omega}_i^+} \int_{\overline{\Omega}_i^+} \( \ln
  (\varepsilon^{1/3} |\ln \varepsilon|^{2/3}|\overline{x}-
  \overline{y}|) + C \) d \bar x \, d \bar y, \label{T1.13}
\end{align} 
for some $C > 0$ independent of $\eps$, where in equation
(\ref{T1.13}) we have rescaled coordinates $\bar x = \eps^{-1/3} |\ln
\eps|^{1/3}x$, $\bar y=\eps^{-1/3}|\ln \eps|^{1/3}$ and introduced the
rescaled versions $\overline{\Omega}_i^+$ of $\Omega_i^+$.  Expanding
the logarithm in (\ref{T1.13}) and using \eqref{ZO.2} and
\eqref{diamOmi}, we obtain that $\bar E_\mathrm{self}^\eps$ can be
bounded from below as follows:
\begin{align}
  \bar E_\mathrm{self}^\eps & \geq {1 \over |\ln \eps|} \sum_i
  |A_i^\eps|^2 \left(\frac{1}{3\pi} - C\left(\frac{\ln |\ln
        \varepsilon|}{|\ln \varepsilon|}\right) - \frac{1}{\pi
      |A_i^\eps|^2 |\ln \varepsilon|} \int_{\overline{\Omega}_i^+}
    \int_{\overline{\Omega}_i^+} \ln |\overline x - \overline y| \,
    d\overline x \, d\overline y
  \right) \nonumber \\
  &\label{T1.101} \geq {1 \over |\ln \eps|} \sum_i
  |A_i^\eps|^2\left(\frac{1}{3\pi} - C\left(\frac{\ln |\ln
        \varepsilon|}{|\ln \varepsilon|}\right) - \frac{1}{\pi |\ln
      \varepsilon|}\ln P_i^\eps \right) \\
  & \geq {1 \over |\ln \eps|} \sum_i |A_i^\eps|^2\left(\frac{1}{3\pi}
    - C\left(\frac{\ln |\ln \varepsilon|}{|\ln \varepsilon|}\right)
  \right), \nonumber
\end{align}
for some $C > 0$ independent of $\eps$ (which changes from line to
line).

Now observe that the term in parentheses appearing in the right-hand
side of (\ref{T1.101}) is positive for $\varepsilon$ sufficiently
small. Using this and the fact that $A_i^\eps \geq \tilde A_i^\eps$,
from (\ref{T1.101}) we obtain
\begin{align}
  \label{T1.11} 
  \bar E_\mathrm{self}^\eps \geq {1 \over |\ln \eps|} \sum_i |\tilde
  A_i^\eps|^2\left(\frac{1}{3\pi} - C\left(\frac{\ln |\ln
        \varepsilon|}{|\ln \varepsilon|}\right) \right),
\end{align}
where $C>0$ is a constant independent of $\ep$. It is also clear from the
definition of $\tilde A_i^\eps$ that there exists a constant $c>0$
such that
\begin{align}
  |\tilde A_i^\eps|^2 \leq cA_i^\eps.
\end{align}
Combining this inequality with (\ref{T1.11}) and choosing any $\eta >
0$, for $\eps$ small enough we have $\eta > Cc\frac{\ln |\ln
  \varepsilon|}{|\ln \varepsilon|^2}$ and, therefore, from
\eqref{EGrho} we obtain
\begin{align}\bar E^{\varepsilon}[u^{\varepsilon}] \geq
  &I_\mathrm{def}^{\varepsilon} + \frac{1}{|\ln \varepsilon|} \sum_i
  \left( \sqrt{4 \pi A_i^\eps} - \(\frac{2\bar \delta}{\kappa^2} +
    \eta \) A_i^\eps+ \frac{1}{3\pi} |\tilde A_i^\eps|^2 \right)
  \nonumber \\
  &+ 2 \iint G_\rho(x-y)d\mu^{\varepsilon}(x)
  d\mu^{\varepsilon}(y). \label{T1.4}
\end{align}

\noindent {\em Step 2: Optimization over $A_i^\eps$.}\\

\noindent 
Focusing on the second term in the right-hand side of (\ref{T1.4}), we
define
\begin{align}
  \label{eq:f}
f(x) := \frac{2 \sqrt{\pi}}{\sqrt{x}} + \frac{1}{3\pi} x,  
\end{align}
and observe that $f$ is strictly convex and attains its minimum of
$3^{2/3}$ at $x= 3^{2/3} \pi$, with
\begin{equation}
  \label{fder} f''(x) = \frac{3 \sqrt{\pi}}{2
    x^{5/2}}.
\end{equation} 
We claim that we can bound the second term in the right-hand side of
(\ref{T1.4}) from below by the sum $I + II + III$ of the following
three terms:
\begin{align}
  \label{E33} I & = \frac{1}{|\ln \varepsilon|} \(3^{2/3} - \frac{2
    \bar \delta}{\kappa^2} - \eta \)\sum_i A_i^\eps + \frac{1}{|\ln
    \varepsilon|}\sum_{A_i^\eps > 3^{2/3} \pi \gamma^{-1}}
  3^{2/3}(3^{-1} \gamma^{-1}-1) A_i^\eps, \\
  II & = \frac{1}{|\ln \varepsilon|} \label{E4a}
  \frac{\gamma^{5/2}}{4\pi^2 \cdot 3^{2/3}} \sum_{A_i^\eps < 3^{2/3}
    \pi \gamma} A_i^\eps
  (A_i^\eps -  3^{2/3} \pi )^2, \\
  III & = \frac{1}{|\ln \varepsilon|} \label{E5a}
  \frac{\gamma^{7/2}}{4 \pi} \sum_{3^{2/3} \pi \gamma \leq A_i^\eps
    \leq 3^{2/3} \pi \gamma^{-1} } (A_i^\eps- 3^{2/3} \pi)^2.
\end{align}
Before proving this, observe that defining
\begin{align}
  \label{Meps}
  M^\eps := \bar E^\eps[u^\eps] - \frac{1}{|\ln \varepsilon|}
  \(3^{2/3} - \frac{2 \bar \delta}{\kappa^2} - \eta \)\sum_i A_i^\eps
  - 2 \iint G_\rho(x - y) d \mu^\eps(x) d \mu^\eps(y),
\end{align}
we have from \eqref{T1.4} and \eqref{E33}--\eqref{E5a} that if
$I_\gamma^\eps$ is as in Theorem \ref{almost}, then
\begin{align}\label{Mlowerbound}
  M^\eps \geq \frac{c_1 }{|\ln \eps|} \sum_{i \notin
    I_{\gamma}^{\eps}} A_i^{\eps} + \frac{c_2}{|\ln \eps|} \sum_{i \in
    I_{\gamma}^{\eps}} (A_i^{\eps}-3^{2/3}\pi)^2 +
  I_\mathrm{def}^{\eps} \geq 0 \qquad \forall \gamma \in (0,
  \tfrac13),
\end{align}
for some constants $c_1,c_2 > 0$ depending only on $\gamma$.

We now argue in favor of the lower bound based on
\eqref{E33}--\eqref{E5a}. First observe that by \eqref{P1.N34} we have
for all $A_i^\eps \geq 3^{2/3} \pi \gamma^{-1}$:
\begin{align}
  \label{I1.2} 
  \sqrt{4 \pi A_i^\eps} + \frac{1}{3\pi} |\tilde A_i^\eps|^2 -
  \(\frac{2 \bar \delta}{\kappa^2} + \eta \)A_i^\eps \geq \left(
    3^{2/3} - \frac{2\bar \delta}{\kappa^2} - \eta \right) A_i^\eps +
  3^{2/3} (3^{-1} \gamma^{-1}-1) A_i^\eps.
\end{align}
When $A_i^\eps < 3^{2/3} \pi \gamma^{-1}$, which corresponds to both
\eqref{E4a} and \eqref{E5a}, we use the convexity of $f$ and
(\ref{fder}):
\begin{align}
  \sqrt{4 \pi A_i^\eps} & + \frac{1}{3\pi} |\tilde A_i^\eps|^2 -
  \(\frac{2 \bar \delta}{\kappa^2} + \eta \)A_i^\eps = A_i^\eps\(
  {2\sqrt{\pi} \over \sqrt{A_i^\eps}} +\frac{1}{3\pi}A_i^\eps -
  \frac{2\bar \delta}{\kappa^2} - \eta \) \nonumber \\
  & = A_i^\eps \left( f(A_i^\eps) - { 2 \bar \delta \over \kappa^2} -
    \eta \right) \label{I1.1}
  \\
  & \geq \(3^{2/3} - \frac{2\bar \delta}{\kappa^2} - \eta \)A_i^\eps +
  \frac12 A_i^\eps f'' \left( 3^{2/3} \pi \gamma^{-1} \right)
  (A_i^\eps - 3^{2/3} \pi)^2, \nonumber
\end{align}
where the last line follows from the second order Taylor formula for
$f(x)$ about $x= 3^{2/3}\pi$ and the fact that $f''(x)$ is decreasing.
Combining \eqref{Meps}, (\ref{I1.2}) and (\ref{I1.1}) yields $M^\eps
\geq I + II + III$.

Now using \eqref{Meps} and \eqref{Mlowerbound} with $\gamma$
sufficiently small, we deduce that
\begin{align}
  \bar E^{\varepsilon}[u^{\varepsilon}] \geq \frac{1}{|\ln
    \varepsilon|} \(3^{2/3} - \frac{2 \bar \delta}{\kappa^2} - \eta \)
  \sum_i A_i^\eps + 2 \iint G_\rho(x-y)d\mu^{\varepsilon}(x)
  d\mu^{\varepsilon}(y).\label{beforelimit}
\end{align}
\bigskip

\noindent {\em Step 3: Passage to the limit.}

\bigskip

\noindent 
We may now conclude from \eqref{EEE}--(\ref{ZO.1}), \eqref{vdef},
\eqref{vmu}, \eqref{Gmumu} and (\ref{ZO.2}) that
\begin{equation}
  \limsup_{\varepsilon \to 0} \int_{\mathbb{T}_l^2} (|\nabla v^{\varepsilon}|^2 +
  \kappa^2 |v^{\varepsilon}|^2 ) dx < +\infty,
\end{equation}
while $(\mu^\eps)$ are bounded in the sense of measures from
\eqref{mulinf}.  Consequently, up to a subsequence
\begin{align}
  v^{\varepsilon} &\rightharpoonup v \textrm{ in }
  H^1(\TT), \\
  \mu^{\varepsilon} &\stackrel{*}{\rightharpoonup} \mu \textrm{ in }
  C(\TT),
\end{align}
where
\begin{align}
  -\Delta v + \kappa^2 v = \mu
\end{align}
holds in the distributional sense.  Now passing to the limit in
\eqref{beforelimit} and recalling \eqref{muA}, we obtain
\begin{align} 
  \liminf_{\varepsilon \to 0} \bar E^{\varepsilon}[u^{\varepsilon}]
  \geq \left( 3^{2/3} - \frac{2\bar \delta}{\kappa^2} - \eta \right)
  \int d\mu + 2 \iint G_\rho(x-y)d\mu(x)d\mu(y),
\end{align} 
using continuity of $G_\rho$. On the other hand, we have $G_\rho(x -
y) \to G(x - y)$ monotonically from below for each $x \not = y$ as
$\rho \to 0$. Moreover, since $\mu$ satisfies \eqref{dmubdG}, the set
$\{(x, y) \in \TT \times \TT : x = y\}$ is $\mu \otimes
\mu$-negligible.  An application of monotone convergence theorem then
yields
\begin{align} \liminf_{\varepsilon \to 0} \bar
  E^{\varepsilon}[u^{\varepsilon}] \geq \left( 3^{2/3} - \frac{2\bar
      \delta}{\kappa^2} \right) \int d\mu + 2\iint
  G(x-y)d\mu(x)d\mu(y),
\end{align} 
upon sending $\rho \to 0$ and then $\eta \to 0$. \hfill $\Box$

\bigskip

We now argue in favor of the corresponding upper bound in Theorem
\ref{main}. The construction resembles quite closely that of the
vortex construction in \cite{sandier00} for the two dimensional
Ginzburg-Landau functional and indeed we borrow several ideas from
that proof and occasionally
refer the reader to that paper for details.

\subsection{Proof of the Upper Bound, Theorem \ref{main} ii)}
\label{s:rec}

As in the proof of the lower bound, we set $d \mu_i^{\varepsilon}(x)$
as in \eqref{muieps}, so that $\mu^{\varepsilon} = \sum
\mu_i^\eps$. If $\int_\TT d \mu = 0$, there is nothing to
prove. Otherwise, using a mollification with a strictly positive
mollifier we can always approximate the measure $\mu$ by a measure
with a smooth strictly positive density and retrieve a recovery
sequence by a standard diagonal argument. Hence without loss of
generality in this section we assume that
\begin{align}\label{approx}
  d\mu(x) = g(x) dx, \qquad c \leq g \leq C,
\end{align}
for some $C > c > 0$.  

\bigskip

\noindent {\em Step 1: Construction of the configuration.}

\medskip

\noindent We claim that for $\eps$ sufficiently small it is possible
to place a total of $N(\eps)$ disjoint spherical droplets, where
\begin{align}
  \label{eq:Neps}
  N(\varepsilon) = \frac{1}{3^{2/3}} \frac{|\ln \varepsilon|}{\pi}
  \mu(\TT) + o(|\ln \eps|),
\end{align}
with centers $\{a_i\}$ in $\TT$ and radius
\begin{align}
  \label{eq:r}
  r = 3^{1/3}\varepsilon^{1/3} |\ln \varepsilon|^{-1/3},
\end{align}
and satisfying for all $i \neq j$
\begin{equation}\label{dist}
  d(\eps) := \min |a_i - a_j| \geq \frac{C}{\sqrt{N(\varepsilon)}},
\end{equation}
for some constant $C > 0$ depending only on $\mu$. Indeed, given $\mu$
satisfying \eqref{approx}, for $\eps$ sufficiently small we can
partition $\TT$ into disjoint squares $\{K_i\}$ of side length
$\eta_\eps > 0$ (hereafter simply denoted $\eta$) satisfying
\begin{align}\label{ep13}
  |\ln \varepsilon|^{-1/2} \ll \eta \ll 1.
\end{align} 
In each $K_i$ we place
\begin{align}
  \label{NKi}
  N_{K_i}(\varepsilon) = \bigg\lfloor \frac{1}{3^{2/3}} \frac{|\ln
    \varepsilon|}{\pi} \mu(K_i) \bigg\rfloor
\end{align}
points $a_i$ (here $m = \lfloor x \rfloor$ denotes the smallest
integer $m \leq x$) satisfying $(\ref{dist})$ and in addition
\begin{equation}
  \dist (a_i,\partial K_i) \geq
  \frac{C}{\sqrt{N(\varepsilon)}}, \qquad N(\eps) := \sum_i N_{K_i}.
\end{equation}
As argued in \cite{sandier00}, our ability to do this follows from the
estimate:
\begin{equation}
  c \eta^2 \leq \mu(K_i) \leq C \eta^2,
\end{equation}
which follows from $(\ref{approx})$ together with \eqref{ep13}. We
finally define our configuration $u^\eps$ by setting the connected
components $\Omega_i^+$ of $\Omega^+ = \{ u^\eps = +1\}$ to be balls
of radius $r$ from \eqref{eq:r} centered at $a_i$, i.e. $\Omega_i^+ :=
B_r(a_i)$. We set $u^\eps=-1$ in the complement of these balls.

With these choices we have
\begin{align}
  \bar E^\eps[u^\eps] &= \frac{2\pi \cdot 3^{1/3} N(\eps) } {|\ln
    \eps|} - \frac{2 \pi \cdot 3^{2/3} N(\eps) \bar \delta }{|\ln
    \eps| \kappa^2} + 2 \iint
  G(x-y)d\mu^{\varepsilon}(x)d\mu^{\varepsilon}(y)  \nonumber \\
  & = \frac{2}{3^{1/3}} \mu(\TT)- \frac{2\bar \delta}{\kappa^2}
  \mu(\TT) + 2 \iint G(x-y)d\mu^{\varepsilon}(x)d\mu^{\varepsilon}(y)
  + o(1). \label{T2.1a}
 \end{align}
 The main point of the rest of the proof is to show that the integral
 term in $(\ref{T2.1a})$ converges to $\iint G(x - y) d\mu(x) d\mu(y)
 + 3^{-1/3} \int d \mu$, with the non-trivial last term coming from
 the self-interaction of the droplets. To prove that these are the
 only contributions to the limit energy, we need to use the fact that
 the droplets do not concentrate too much as $\eps \to 0$.

\medskip

\noindent {\em Step 2: Convergence of the configurations.}

\bigskip
\noindent
Defining $\mu^{\varepsilon}$ as before, it is clear from the
construction that
\begin{align}\label{cvmuu}
  \mu^{\varepsilon} &\rightharpoonup \mu \textrm{ in }
  (C(\TT))^*.
\end{align}
Fix $\rho > 0$ sufficiently small (depending only on $\kappa$ and
$\ell$) and consider $G_\rho(x - y)$ defined as in \eqref{Geta}. By
the continuity of $G_\rho$ in $\TT$ we have
\begin{equation}\label{T1.22} \lim_{\varepsilon \to 0}
  \iint
  G_\rho(x-y)d\mu^{\varepsilon}(x)d\mu^{\varepsilon}(y)
  = \iint
  G_\rho(x-y)d\mu(x)d\mu(y).\end{equation}
Now, let $I_{\rho}$ be the collection of indices $(i,j)$ such that
$0 < |a_i - a_j| < \rho$. Then for $\eps$ small enough we can write
\begin{align}\label{T2.22}
  \iint& \big( G(x - y) - G_\rho(x - y) \big) d\mu^{\varepsilon}(x)
  d\mu^{\varepsilon}(y) \nonumber \\ & \leq
  \sum_{i=1}^{N(\varepsilon)} \iint G(x - y) d\mu_i^{\varepsilon}(x)
  d\mu_i^{\varepsilon}(y) + \sum_{(i,j) \in I_{\rho}} \iint G(x - y)
  d\mu_i^{\varepsilon}(x) d\mu_j^{\varepsilon}(y) \\ &\leq
  \frac{1}{6\pi |\ln \varepsilon| } \sum_{i=1}^{N(\varepsilon)}
  |A_i^\eps|^2 + \frac{C \ln|\ln \varepsilon|}{|\ln \varepsilon|} +
  {C' \over |\ln \eps|^2} \sum_{(i,j) \in I_{\rho}} A_i^\eps A_j^\eps
  \left| \ln \dist (\Omega_i^+, \Omega_j^+) \right|, \nonumber
\end{align}
for some $C, C' > 0$ independent of $\eps$ or $\rho$, where $A_i^\eps
= 3^{2/3} \pi $ and we expanded the Green's function as in
\eqref{T1.13} in the proof of the lower bound.  Now, for $k = 1, 2,
\ldots, K_\rho(\eps)$, with $K_\rho(\eps) := \lfloor \rho / d(\eps)
\rfloor$, let $I_\rho^k \subset I_\rho$ be disjoint sets consisting of
all indices $(i, j)$ such that $k d(\eps) \leq |a_i - a_j| < (k + 1)
d(\eps)$.  Since by the result on optimal packing density of disks in
the plane \cite{toth40} we have $|I_\rho^k| \leq c k N(\eps)$ for some
universal $c > 0$ (here again $|I_\rho^k|$ denotes the cardinality of
$I_\rho^k$), in view of \eqref{eq:Neps} it holds that
\begin{align}
  \label{eq:shortrange}
  \frac{1}{|\ln \eps|^2} \sum_{(i,j) \in I_{\rho}} A_i^\eps A_j^\eps
  \left| \ln \dist (\Omega_i^+, \Omega_j^+) \right| \leq {C N(\eps)
    \over |\ln \eps|^2} \sum_{k=1}^{K_\rho(\eps)} k |\ln (k d(\eps))|
  \nonumber \\ \leq {2 C N(\eps) \over |\ln \eps|^2 d^2(\eps) }
  \int_{d(\eps)}^{\rho} t |\ln t| dt \leq C' \left( { |\ln d(\eps)|
      \over |\ln \eps|} + \rho^2 |\ln \rho| \right) \leq 2 C' \rho^2
  |\ln \rho|,
\end{align}
for some $C, C' > 0$ independent of $\eps$ or $\rho$, when $\eps$ and
$\rho$ are sufficiently small.  Therefore, from \eqref{eq:r} and
(\ref{T2.22}) we obtain
\begin{align}\label{T2.21}
  \limsup_{\eps \to 0} \iint \big( G(x - y) - G_\rho(x - y) \big)
  d\mu^{\varepsilon}(x) d\mu^{\varepsilon}(y) \leq 2^{-1} \cdot
  3^{-1/3} + o(\rho).
\end{align}
Finally combining (\ref{T2.21}) with (\ref{T2.1a}) and (\ref{T1.22}),
upon sending $\eps \to 0$, then $\rho \to 0$ and applying the monotone
convergence theorem we have
\begin{align} \lim_{\varepsilon \to 0} \bar
  E^{\varepsilon}[u^{\varepsilon}] \leq \left( 3^{2/3} - \frac{2\bar
      \delta}{\kappa^2} \right) \int d\mu + 2\iint
  G(x-y)d\mu(x)d\mu(y),
\end{align} 
as required. The fact that $v^{\varepsilon} \rightharpoonup v$ follows
from \eqref{cvmuu} and the uniform bounds just demonstrated on the
terms involving the Green's function in (\ref{T2.1a}), from which it
follows that (\ref{PDE}) is satisfied distributionally. \hfill $\Box$

\section{Proof of Theorem \ref{almost}}
\label{sec:proof-theorem-almost}

In the proof of Sec. \ref{sec:main}, we have in fact established
Theorem \ref{almost}, which is clear by \eqref{Mlowerbound}. Indeed,
we have for a sequence of almost minimizers $(u^{\varepsilon})$:
\begin{align}
  \lim_{\varepsilon \to 0} E^{\varepsilon}[u^{\varepsilon}] - E_0[\mu]
  = 0.
\end{align}
Observing that $M^\eps$ defined in \eqref{Meps} does not contribute to
$E_0[\mu]$, we have established that $M^\eps \to 0$ as $\eps \to 0$
for any $\gamma < \frac13$ and, as a consequence, we obtain
(\ref{cor1.1})--(\ref{cor1.2}) for, say, $\gamma = \frac16$. Then it
is easy to see from the definition of $I_\gamma^\ep$ that the
statement of the Theorem, in fact, holds for any $\gamma \in (0, 1)$.
\hfill $\Box$

\section{Proof of Theorem \ref{main2}}
\label{sec:proof-theor-refm}

We now turn to the proof of Theorem \ref{main2} extending the result
of Theorem \ref{main} for the sharp interface energy $E^\eps$ to the
diffuse interface energy $\mathcal E^\eps$. The proof proceeds by a
refinement of the ideas of \cite[Sec. 4]{m:cmp10} to establish
matching upper and lower bounds for $\mathcal E^\eps$ in terms of
$E^\eps$ for sequences with bounded energy. 

\bigskip

\noindent {\em Step 1: Approximate lower bound.}

\bigskip
\noindent In the following, it is convenient to rewrite the energy
\eqref{EE2} in an equivalent form
\begin{align}
  \label{EE3}
  \mathcal E^\eps[u^\eps] = \int_\TT \left( \frac{\eps^2}2 | \nabla
    u^\eps|^2 + W(u^\eps) + \frac12 |\nabla v^\eps|^2 \right) dx,
  \qquad -\Delta v^\eps = u^\eps - \bar u^\eps, \qquad \int_\TT v^\eps
  dx = 0.
\end{align}
Fix any $\delta \in (0, 1)$ and consider a sequence $(u^\eps) \in
\mathcal A^\eps$ such that $\limsup_{\eps \to 0} \|
u^\eps\|_{L^\infty(\TT)} \leq 1$ and $ \mathcal E^\eps[u^\eps] \leq C
\eps^{4/3} |\ln \eps|^{2/3}$ for some $C > 0$ independent of
$\eps$. Then we claim that
\begin{align}
  \label{limsups}
  \limsup_{\eps \to 0} \|u^\eps\|_{L^\infty(\TT)} = 1, \qquad
  \lim_{\eps \to 0} \|v^\eps\|_{L^\infty(\TT)} = 0.
\end{align}
Indeed, for the first statement we have from the definition of
$\mathcal E^\eps$ in \eqref{EE2} that
\begin{align}
  \label{interfaces}
  | \Omega_0^\delta | \leq C \eps^{4/3} |\ln \eps|^{2/3} \delta^{-2},
  \qquad \Omega_0^\delta := \{ -1 + \delta \leq u^\eps \leq 1 - \delta
  \},
\end{align}
for some $C > 0$ independent of $\eps$. Hence, in particular,
$\limsup_{\eps \to 0} \|u^\eps\|_{L^\infty(\TT)} \geq 1$, proving the
first statement of \eqref{limsups}.  To prove the second statement in
\eqref{limsups}, we note that by standard elliptic theory (see, e.g.,
\cite[Theorem 9.9]{gilbarg}) we have $\|v^\eps\|_{W^{2,p}(\TT)} \leq
C'$ for any $p > 2$ and some $C' > 0$ independent of $\eps$ and,
hence, by Sobolev embedding $\|\nabla v^\eps\|_{L^\infty(\TT)} \leq
C''$ for some $C'' > 0$ independent of $\eps$ as well. Therefore,
applying Poincar\'e's inequality, we obtain
\begin{align}
  \label{estveps}
  C \eps^{4/3} |\ln \eps|^{2/3} \geq \mathcal E^\eps[u^\eps] \geq C'
  \int_\TT |v^\eps|^2 dx \geq C'' \|v^\eps\|_{L^\infty(\TT)}^4,
\end{align}
for some $C', C'' > 0$  independent of $\eps$, yielding the claim. 

In view of \eqref{limsups}, for small enough $\eps$ we have
$\|u^\eps\|_{L^\infty(\TT)} \leq 1 + \delta^3$ and
$\|v^\eps\|_{L^\infty(\TT)} \leq \delta^3$, and by the assumption on
energy we may further assume that $\mathcal E^\eps[u^\eps] \leq
\delta^{12}$. Therefore, by \cite[Proposition 4.2]{m:cmp10} there
exists a function $\tilde u_0^\eps \in \mathcal A$ such that
\begin{align}
  \label{lowerbd}
  \mathcal E^\eps[u^\eps] \geq (1 - \delta^{1/2}) E^\eps[\tilde
  u_0^\eps].
\end{align}
In particular, $(\tilde u_0^\eps)$ satisfy the assumptions of Theorem
\ref{main}, and, therefore, upon extraction of subsequences we have
$\tilde \mu_0^\eps \rightharpoonup \mu \in \mathcal M^+(\TT) \cap
H^{-1}(\TT)$ in $(C(\TT))^*$, where
\begin{align}
  \label{tildemueps}
  d \tilde \mu_0^\eps(x) := \tfrac12 \eps^{-2/3} |\ln \eps|^{-1/3} (1
  + \tilde u_0^\eps(x)) dx.
\end{align}
Furthermore, recalling that by construction the jump set of $\tilde
u^\eps_0$ is either contained in $\Omega_0^\delta$ or empty, see the
proof of \cite[Lemma 4.1]{m:cmp10}, from \eqref{interfaces} we have
\begin{align}
  \label{diffu0utilde}
  \| \tilde u_0^\eps - u_0^\eps\|_{L^1(\TT)} \leq C \eps^{4/3} |\ln
  \eps|^{2/3} \delta^{-2},
\end{align}
where $u^\eps_0$ is given by \eqref{ueps0}, for some $C > 0$
independent of $\eps$. Comparing \eqref{diffu0utilde} with
\eqref{tildemueps}, we then see that $\mu_0^\eps \rightharpoonup \mu$
in $(C(\TT))^*$ as well. The result of part (i) of Theorem \ref{main2}
then follows by the arbitrariness of $\delta > 0$ via a diagonal
process. \qed

\bigskip

\noindent {\em Step 2: Approximate upper bound.}

\bigskip

\noindent First note that if $\mu = 0$, we can choose $u^\eps = \bar
u^\eps$. Indeed, we have $\eps^{-4/3} |\ln \eps|^{-2/3} \mathcal
E^\eps[\bar u^\eps] = \ell^2 W(\bar u^\eps) = {\ell^2 \bar \delta^2
  \over 2 \kappa^2} + o(1)$ and $\bar u^\eps \to -1$. On the other
hand, if $\int_\TT d \mu > 0$, we can construct the approximate upper
bounds for a suitable lifting of the recovery sequences in the proof
of Theorem \ref{main}(ii) to $\mathcal A^\eps$. Let $(\tilde u_0^\eps)
\in \mathcal A$ be a recovery sequence constructed in
Sec. \ref{s:rec}. This sequence consists of circular droplets of the
optimal radius $r = 3^{1/3} \eps^{1/3} |\ln \eps|^{-1/3} \gg
\eps^{1/2}$ and mutual distance $d \geq C |\ln \eps|^{-1/2} \gg
\eps^{1/2}$, for some $C > 0$ independent of $\eps$. In addition,
since
\begin{align}
  \label{E3}
  E^\eps[\tilde u^\eps_0] = \frac{\eps}{2} \int_\TT |\nabla \tilde
  u_0^\eps| \, dx + 2 \int_\TT \Big( |\nabla \tilde v^\eps|^2 +
  \kappa^2 |\tilde v^\eps|^2 \Big) dx \leq C \eps^{4/3} |\ln
  \eps|^{2/3},
\end{align}
where $\tilde v^\eps(x) = \int_\TT G(x - y) (\tilde u_0^\eps(y) - \bar
u^\eps) dy$, for some $C > 0$ independent of $\eps$, by the argument
of \eqref{estveps} one can see that $\lim_{\eps \to 0} \|\tilde
v^\eps\|_{L^\infty(\TT)} = 0$. Therefore, for any $\delta \in (0, 1)$
and $\eps > 0$ sufficiently small we have $\|\tilde
v^\eps\|_{L^\infty(\TT)} \leq \delta$ and $E^\eps[\tilde u^\eps_0]
\leq \delta^{5/2}$. We can then apply \cite[Proposition 4.3]{m:cmp10}
to obtain a function $u^\eps \in \mathcal A^\eps$ such that
\begin{align}
  \label{upperbd}
  \mathcal E^\eps[u^\eps] \leq (1 + \delta^{1/2}) E^\eps[\tilde
  u_0^\eps].
\end{align}
Furthermore, by the construction of $u^\eps$ (see
\cite[Eqs. (4.31)--(4.33)]{m:cmp10}) and arbitrariness of $\delta >
0$, we also have $\limsup_{\eps \to 0} \|u^\eps\|_{L^\infty(\TT)} =
1$, and
\begin{align}
  \label{diffu0u}
  \| \tilde u_0^\eps - u_0^\eps\|_{L^1(\TT)} \leq C \eps^{4/3} |\ln
  \eps|^{2/3},
\end{align}
for some $C > 0$ independent of $\eps$, where $u^\eps_0$ is given by
\eqref{ueps0}, and we used \eqref{E3}. Hence $\mu_0^\eps
\rightharpoonup \mu = \lim_{\eps \to 0} \tilde \mu^\eps_0$ in
$(C(\TT))^*$.  The result of part (ii) of Theorem \ref{main2} again
follows by arbitrariness of $\delta > 0$ via a diagonal process. \qed

\begin{remark}
  It is possible to chose $\delta = \eps^{\alpha}$ for $\alpha > 0$
  sufficiently small in the arguments of the proof of Theorem
  \ref{main2}. Therefore, given a sequence of minimizers $(u^\eps) \in
  \mathcal A^\eps$ of $\mathcal E^\eps$ and the corresponding sequence
  $(u_0^\eps) \in \mathcal A$ of minimizers of $E^\eps$, one has
  \begin{align}
    \label{liminfEequiv}
    \eps^{-4/3} |\ln \eps|^{-2/3} \mathcal E^\eps[u^\eps] =
    \eps^{-4/3} |\ln \eps|^{-2/3} E^\eps[u_0^\eps] + O(\eps^\alpha),
  \end{align}
  for some $\alpha \ll 1$, as $\eps \to 0$.
\end{remark}

\section{Proof of Theorem \ref{almost2}}
\label{sec:proof-theor-refalm}

Let $(u^\eps)$ be a sequence from Theorem \ref{main2}(ii). Arguing as
in Step 1 of the proof of Theorem \ref{main2}, for every $\delta > 0$
sufficiently small there exists a sequence $(\tilde u^\eps_0) \in
\mathcal A$ such that \eqref{lowerbd} holds, the jump set of $\tilde
u^\eps_0$ is contained in $\{ -1 + \delta \leq u^\eps \leq 1 -
\delta\}$, and if $\tilde \mu_0^\eps$ is defined via
\eqref{tildemueps}, then $\tilde \mu_0^\eps \rightharpoonup \mu$ in
$(C(\TT))^*$. On the other hand, applying the result of Theorem
\ref{main}(i), we obtain
\begin{multline}
  \label{EalmE}
  E^0[\mu] \geq \limsup_{\eps \to 0} \eps^{-4/3} |\ln \eps|^{-2/3}
  \mathcal E^\eps[u^\eps] \geq (1 - \delta^{1/2}) \limsup_{\eps \to 0}
  \eps^{-4/3} |\ln \eps|^{-2/3} E^\eps[\tilde u_0^\eps] \\ \geq (1 -
  \delta^{1/2}) \liminf_{\eps \to 0} \eps^{-4/3} |\ln \eps|^{-2/3}
  E^\eps[\tilde u_0^\eps] \geq (1 - \delta^{1/2}) E^0[\mu].
\end{multline}
Therefore, in view of arbitrariness of $\delta > 0$ we conclude that
$(\tilde u_0^\eps)$ is a sequence of almost minimizers of $E^\eps$
with prescribed density $\mu$ by a diagonal process. As a consequence,
Theorem \ref{almost} applies to $(\tilde u^\eps_0)$. Moreover, by
\eqref{diffu0utilde} and the fact that 
\begin{align}
  \label{supzu}
  \lim_{\eps \to 0} \eps^{-2/3} |\ln \eps|^{-1/3} |\{ u^\eps > 0 \}| =
  \int_\TT d \mu > 0,
\end{align}
we obtain
\eqref{limOmpu}. \qed

\bigskip

\noindent \textbf{Acknowledgments} The research of D. G.  was
partially supported by an NSERC PGS D award. The work of C. B. M. was
supported, in part, by NSF via grants DMS-0718027 and DMS-0908279. The
research of S. S. was supported by the EURYI award. C. B. M. would
like to acknowledge valuable discussions with H. Kn\"upfer and
M. Novaga.

\bibliography{../nonlin,../mura,../stat}

\bibliographystyle{plain}

\noindent {\sc Dorian Goldman\\
  Courant Institute of Mathematical
  Sciences, New York, NY 10012, USA,\\
  \& UPMC Univ  Paris 06, UMR 7598 Laboratoire Jacques-Louis Lions,\\
  Paris, F-75005 France ;\\
  CNRS, UMR 7598 LJLL, Paris, F-75005 France \\
  {\tt      dgoldman@cims.nyu.edu} \\
  \\
  {\sc Cyrill B. Muratov} \\
  Department of Mathematical Sciences, \\ New Jersey Institute of
  Technology, \\ Newark,
  NJ 07102, USA \\
  {\tt muratov@njit.edu}
  \\  \\
  {\sc Sylvia Serfaty}\\
  UPMC Univ  Paris 06, UMR 7598 Laboratoire Jacques-Louis Lions,\\
  Paris, F-75005 France ;\\
  CNRS, UMR 7598 LJLL, Paris, F-75005 France \\
  \&  Courant Institute, New York University\\
  251 Mercer st, NY NY 10012, USA\\
  {\tt serfaty@ann.jussieu.fr}

\end{document}